\definecolor{toc}{RGB}{13,55,174}	
\newtheorem{theorem}{Theorem}[section]
\newtheorem{lemma}[theorem]{Lemma}
\newtheorem{corollary}{Corollary}[theorem]
\newcommand{\e}{\varepsilon} 
\newcommand{\lp}{\left} 
\newcommand{\rp}{\right}
\renewcommand{\Pr}[2]{\textbf{Pr}_{#1}\lp[#2\rp]} 
\newcommand{\E}[1]{\mathbb{E}\lp[#1\rp]}
\newcommand{\oracle}{\mathcal{O}}
\newtheorem{definition}{Definition}
\newcommand{\noisygraph}{\textsc{graph connectivity with noisy queries}}
\newcommand{\pebe}{This work is partially supported by NTUA Basic Research Grant (PEBE 2020) ``Algorithm Design through Learning Theory: Learning-Augmented and Data-Driven Online Algorithms (LEADAlgo)''.}
\title{Graph Connectivity with Noisy Queries\footnotetext[0]{\pebe}}
\author{
Dimitris Fotakis$^1$  \\ {\tt fotakis@cs.ntua.gr} \and
Evangelia Gergatsouli$^2$ \\ {\tt gergatsouli@wisc.edu} \and
\\
Charilaos Pipis$^1$ \\ {\tt chpipis@softlab.ntua.gr} \and
\\
Miltiadis Stouras$^{3}$ \\ {\tt miltiadis.stouras@epfl.ch} \and
\\
Christos Tzamos$^2$ \\ {\tt tzamos@wisc.edu}
\\
\\
$^1$National Technical University of Athens\\
$^2$University of Wisconsin-Madison\\
$^3$ École Polytechnique Fédérale de Lausanne\\
}
\date{}
\begin{document}

\maketitle
\begin{abstract}

Graph connectivity is a fundamental combinatorial optimization problem that arises in many practical applications, where usually a spanning subgraph of a network is used for its operation. However, in the real world, links may fail unexpectedly deeming the networks non-operational, while checking whether a link is damaged is costly and possibly erroneous. After an event that has damaged an arbitrary subset of the edges, the network operator must find a spanning tree of the network using non-damaged edges by making as few checks as possible.

Motivated by such questions, we study the problem of finding a spanning tree in a network, when we only have access to noisy queries of the form “Does edge e exist?”.  We design efficient algorithms, even when edges fail adversarially, for all possible error regimes;  2-sided error (where any answer might be erroneous), false positives (where “no” answers are always correct) and false negatives (where “yes” answers are always correct). In the first two regimes we provide efficient algorithms and give matching lower bounds for general graphs. In the False Negative case we design efficient algorithms for large interesting families of graphs (e.g. bounded treewidth, sparse).
Using the previous results, we provide tight algorithms for the practically useful family of planar graphs in all error regimes.

\end{abstract}

\setcounter{page}{0}
\thispagestyle{empty}
\newpage

\section{Introduction}

From road and railway networks to electric circuits and computer networks, maintaining an operational spanning subgraph of a network is crucial in many real-life applications. This problem can be viewed as a traditional combinatorial optimization problem, graph connectivity. However, in real-life applications, edges sometimes fail; a natural destruction on the road or a malfunction in the connector cable can render a link in the network non operational. Worse still, detecting faulty connections is not always that simple; our tools may be unreliable, resulting in untrustworthy indications. Can we efficiently find an operational spanning tree, when we cannot obtain reliable signals on the operation of edges?

\textbf{In this work we study the problem of finding a spanning tree using noisy queries on the existence of edges}. Specifically, given a moldgraph $G$, where some subgraph $H$ of $G$ is realized, we have access to an oracle that answers questions of the form \emph{``Does edge $e$ exist in $H$?"}. The answers the oracle gives are \emph{inconsistent}: for a specific edge $e$, the answer might differ every time. Queries to the oracle are costly, therefore our goal is to find a spanning tree asking as few queries as possible. We design efficient algorithms that achieve this goal, in all 3 different error regimes; 2-sided (where any answer might be erroneous), 1-sided false positives (where “no” answers are always correct) and 1-sided false negatives (where “yes”
answers are always correct).

\subsection{Our Contribution}
As a warm-up, we begin by solving the simpler problem of verifying the connectivity of a tree (Theorem~\ref{thm:verify-connectivity-linear}). In the 2-sided error case, we give an algorithm performing $O(m\log m)$ queries and show that no algorithm can perform better, even on special cases like sparse graphs. In the 1-sided False Negative error regime, we design an algorithm that is optimal on planar graphs and yields
efficient guarantees for other special families like graphs with treewidth $k$ or degeneracy $k$ ($O(km)$ queries) and graphs with Hadwiger number $k$ ($O(k \sqrt{\log k} m)$ queries). The same algorithm
can be used for general graphs, and outside the aforementioned
families, its performance gracefully degrades to that of the naive
strategy (that is $O(m \log n)$ queries).

In the False-Positive error case, our algorithm obtains tight guarantees for general graphs~(Theorem~\ref{thm:combined-fp-algo}), while in the special case where the realized graph is acyclic, the query complexity becomes linear.
Both in the False-Negatives and the False-Positives case, our algorithms \emph{do not need to know} whether the graph has any special properties (sparsity, acyclic realized subgraph etc), they can simply run a unified algorithm that
achieves the best guarantee according to each case. 

Our results, imply tight algorithms for planar graphs, which is a family of graphs that is frequently encountered in road/railway networks, electrical circuits, image processing/computer vision~\cite{YarkFowl2015,SchmToppCreme2009}.

\subsection{Related Work}
Variations of this problem have been studied in literature, however they differ from our setting. A crucial assumption in all of the previous works, is that each edge has an independent probability of existence, which
however, does not always bear out in practice. For example, a power outage
may deem several local network links non-operational. In our work,
we allow the existence probabilities of different edges to be arbitrarily
correlated.

\cite{upfal_noisy_94} first studied evaluating boolean decision trees where the nodes are noisy and the goal is to find the correct leaf within a tolerance parameter $Q$. In \cite{FuWangKuma2014,FuFuXuPengWangLu2017} the authors study finding spanning trees in Erdos-Renyi graphs, a special class of random graphs where each edge exists independently with some probability. In their setting each edge has a \emph{query cost}, and the goal is to find a spanning tree using the least number of queries. However, contrary to their setting, we handle both adversarially selected realized graphs and noisy answers to existence queries.
More recently, \cite{LyuRenAbbaZhan2021} also studied the problem of finding a MST when the edges can fail with some probability. The work of \cite{HoffErleKrizMihalRama2008} and \cite{ErleHoff014} considers verifying spanning trees where each edge has a weight inside an uncertainty region and the algorithm can ask the edge to reveal it. The goal is to find a small weight spanning tree without asking all the edges for their weight. None of these settings account for \emph{inconsistent} queries, like in our setting. 

On a different problem in \cite{mazumdar_clustering_2017}, the authors study clustering in a similar setting to ours where they have access to queries of the  form ``is $u$ and $v$ in the same cluster?". However in their model, the oracle gives consistent answers on each query\footnote{In our case, asking if an edge $e$ exists can change from ``Yes" to ``No" or vice versa for distinct queries on the same edge.}.

\section{Preliminaries}\label{sec:prelims}

In \noisygraph{} we are given a graph $G=(V,E)$, 
with edge set $E$ and node set $V$ called the \emph{moldgraph}, then an adversary selects a \textit{connected} subgraph $G'$ of $G$ to be realized.
The goal of the algorithm is to find a spanning tree in the subgraph $G'$ spending as little time as possible gathering information. The algorithm does not directly observe the subgraph that is realized, but only has access to an oracle that answers questions of the form ``\emph{Is edge $e\in E$ realized?}". Each call to this oracle costs $1$, therefore our goal is to find a realized spanning tree, with constant probability using the minimum number of queries to the oracle. 

This oracle, however, is noisy and inconsistent; it might not give the correct answer and when asked multiple times on the same edge, it may give different answers. More formally, the oracle is a function $\oracle: E\rightarrow \{\text{Yes, No}\}$, that given an edge $e\in E$ answers either \emph{Yes} or \emph{No}, indicating whether the edge is realized or not. We study all the possible error types for the oracle, outlined below.

\paragraph{2-sided error:} the oracle's answer is wrong  (ie. with ``No" for realized edges and ``Yes" for non-realized) with \textit{constant} error probability $p < 1/2$. 

\paragraph{1-sided error, False Negative (FN):} if the edge is not realized then the response is always ``No". If it is realized then the response is ``No" with \textit{constant} error probability $p < 1/2$, and ``Yes" with probability $(1-p)$. Thus, when the oracle responds ``Yes" then it is certain that this edge is realized, but when it responds ``No" then the edge may or may not be realized, hence the False Negative responses.

\paragraph{1-sided error, False Positive (FP):} if the edge is realized then the response is always ``Yes". If it is not realized then the response is ``Yes" with \textit{constant} error probability $p < 1/2$, and ``No" with probability $(1-p)$. Thus, when the oracle responds ``No" then it is certain that this edge is not realized, but when it responds ``Yes" then the edge may or may not be realized, hence the False Positive responses.

\subsection{Graphs notation}

We present some useful definitions for concepts we use throughout the paper. We begin by our definition of sparsity. Sometimes the parameter $\rho$ is referred to as the average degree of the graph.

\begin{definition}[$\rho$-sparse Graphs]
A graph $G$ with $m$ edges and $n$ vertices is called $\rho$-sparse if $m \leq \rho n$. 
\end{definition}

It is noteworthy that our definition of $\rho$-sparsity for graphs is weaker than the usual definitions of graph sparsity, in the sense that it can be easily satisfied without requiring local sparsity properties. This means that constructing efficient algorithms for this property, directly translates to efficient algorithms for other usual sparsity parameters. For example, using our definition, $k$-degenerate graphs are $k$-sparse \cite{lick_white_1970}, $k$-treewidth graphs are also $k$-sparse \cite{robertson1984graph}, and graphs with Hadwiger number $k$ are $O(k \sqrt{\log k})$-sparse \cite{kostochka1984lower, bollobas1980hadwiger}.

\begin{definition}[Edge Contraction]
The edge contraction operation on an edge $e = \{u, v\}$ of a graph $G$ results in a new graph $G'$ wherein $u$ and $v$ are replaced by a new vertex $uv$ which is connected to all vertices of $G$ that were incident to either $u$ or $v$.
\end{definition}

\begin{definition}[Graph minor~\cite{lovasz2006}]
A graph $H$ is called a minor of a graph $G$, if $H$ can occur after applying a series of edge deletions, vertex deletions, and edge contractions.
\end{definition}

\begin{definition}[Minor-closed Graph Family]
A set $\mathcal{F}$ of graphs is called a minor-closed graph family if for any $G \in \mathcal{F}$ all minors of $G$ also belong to $\mathcal{F}$.
\end{definition}

The algorithms that we present for FN and FP queries produce and handle multigraphs during intermediate steps. Here, we present and define some of the main concepts we are using from multigraphs. 

An edge $e = \{u, v\}$ in a multigraph is a link connecting $u$ and $v$. There might be other parallel edges between $u$ and $v$, and each one is distinct. Sometimes we will need to work with the set of all parallel edges between $u$ and $v$. We call this set a \emph{super-edge} between $u$ and $v$. For simplicity, we consider simple edges to be super-edges with size 1.

\begin{definition}[Neighborhood]
    For a vertex $v\in V$, we denote by $N(v)$ its \emph{neighborhood}, which is the set of all super-edges with $v$ as one of the endpoints.
\end{definition}

\begin{definition}[Degree]
For a node $v\in V$, degree $\text{deg}(v)$ of $v$ is the size of $N(v)$.
\end{definition}

Finally, we note that whenever an algorithm performs an edge-contraction operation on edge $e$ of a multigraph, then we delete all the other parallel edges to $e$ as well, leaving no self-loops in the graph. Moreover, if any parallel super-edges result after the contraction, they are replaced by a larger super-edge, their union.

\section{Verifying Connectivity}\label{sec:verify}
As a warm-up, before attempting to find a realized spanning tree of the graph,
we begin by presenting an algorithm for verifying the connectivity of a tree.
Specifically, given a tree, which could be the operating spanning tree of a
network, we want to verify whether it is connected or not, that is whether 
all of its edges are realized. 

Naively verifying this property (ie. performing a fixed number of queries 
on every edge) needs $O(n\log n)$ queries to yield a constant probability
guarantee. However, we show that it is possible to verify the connectivity
of the tree using only $O(n)$ queries, which is asymptotically the same
as if our oracle had no noise.

\begin{algorithm}[tb]
\caption{$Verify(T)$: A verification protocol for tree connectivity with 2-sided error}
\textbf{Input}: Tree $\mathcal{T}$ of $n$ edges\\
\textbf{Parameters}: $\e$, $\delta$, $p$\\
\textbf{Output}: True iff $T$ is connected
\label{alg:verify-connectivity}
\begin{algorithmic}[1] 
    \STATE $threshold\ \gets \lceil \log_{\frac{1-p}{p}} \left( \frac{1}{\delta} \right) \rceil$ 
    \STATE $budget\ \gets \lceil \frac{1}{\epsilon} \cdot \frac{1}{1-2p} \rceil \cdot threshold \cdot n$ 
    \FOR{$e \in E(T)$}
        \STATE counter $\gets 0$ 
        \WHILE{(counter $<$ threshold) \& (budget $>$ 0)}
            \STATE q $\gets$ Query edge $e$ 
            \STATE budget $\gets$ budget $-1$ \\
             \IF{q is ``Yes''}
                \STATE counter $\gets$ counter $+1$ 
            \ELSE
                \STATE counter $\gets$ counter $-1$
            \ENDIF
        \ENDWHILE
        \IF{budget = 0}
            \STATE \textbf{return} False
        \ENDIF\\
    \ENDFOR\\
    \STATE \textbf{return} True
\end{algorithmic}
\end{algorithm}

Our algorithm (Algorithm~\ref{alg:verify-connectivity}) has a
predetermined budget of queries that depends on the requested guarantees.
It fixes an ordering of the edges, and for each edge it performs queries 
until it receives $c$ more positive answers than negative ones (where $c$ is
also fixed and depends on the requested guarantees). For realized
edges, since $p < 1/2$, the expected number of queries until they reach $c$
is polynomial in $c$, while the probability to never yield $c$ more 
positive answers than negative ones, within the budget, is exponentially 
small in $c$. Using a global budget instead of a fixed per-edge number of queries, allows us to dynamically allocate it based on need.
We save up queries from realized edges that reach the
threshold quickly and spend it on other realized edges that yield multiple
false negatives. On the other hand, the probability that a non-realized edge
reaches $c$ is exponentially small in $c$. As a result, if a non-realized
edge exists in the tree, the algorithm will consume all of its budget on this
edge before reaching the threshold. Using this approach, 
Algorithm~\ref{alg:verify-connectivity} achieves the following guarantees.

\begin{restatable}{theorem}{verifyCon}\label{thm:verify-connectivity-linear}
Algorithm~\ref{alg:verify-connectivity} correctly classifies connected trees
with probability $1 - \e$ and disconnected trees
with probability $1 - \delta$, while performing at most
$O\left( \frac{1}{\e} n \log \frac{1}{\delta} \right)$ queries.
\end{restatable}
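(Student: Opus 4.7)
The plan is to model the counter for each edge $e$ as a biased $\pm 1$ random walk $S^e_t$ starting at $0$, where at each step we go to $+1$ with probability $\Pr[\text{``Yes''}]$ and to $-1$ with probability $\Pr[\text{``No''}]$, and stop as soon as $S^e_t$ reaches the $\mathrm{threshold}$ value $c := \lceil \log_{(1-p)/p}(1/\delta)\rceil$ (ignoring, for the moment, the global budget). Realized edges induce a positive drift $(1-2p) > 0$, while unrealized edges induce a negative drift $-(1-2p) < 0$. The two error analyses will track these two regimes separately and then glue them through the global budget.

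For the \textbf{disconnected} case, pick any unrealized edge $e^\star$ that lies on $T$. Since the walk $S^{e^\star}_t$ has step distribution biased towards $-1$, the classical gambler's ruin / exponential martingale argument (applied to the martingale $((1-p)/p)^{S^{e^\star}_t}$ with optional stopping at the first time we hit $+c$) gives
\begin{equation*}
\Pr\!\left[\,S^{e^\star}_t \text{ ever reaches } +c\,\right] \;\leq\; \left(\frac{p}{1-p}\right)^{c} \;\leq\; \delta,
\end{equation*}
by the very choice of $c$. If this event fails, then the while-loop on $e^\star$ never terminates through the threshold check, so it must exhaust the remaining budget and the algorithm returns \textsc{False}. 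If the budget was already exhausted on an earlier edge, the algorithm has also returned \textsc{False}. Either way, misclassifying a disconnected tree happens with probability at most $\delta$, as required.

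For the \textbf{connected} case, every edge is realized, so every per-edge walk has positive drift $(1-2p)$. By Wald's identity applied to the stopping time $\tau_e$ at which $S^e_t$ first hits $+c$ (which is almost surely finite since the drift is positive), one gets $\E{\tau_e} \leq c/(1-2p)$. Summing over the $n$ edges of $T$, the expected total number of queries the algorithm would issue in the absence of a budget cap is at most $n\,c/(1-2p)$. Since the global budget is set to $\lceil 1/\e \cdot 1/(1-2p)\rceil\cdot c\cdot n$, Markov's inequality yields that the probability the algorithm actually runs out of budget, and thus returns \textsc{False} on a connected tree, is at most $\e$. The query bound $O(\e^{-1} n\log(1/\delta))$ follows immediately from the definitions of $c$ and of the budget, since $p$ is a fixed constant.

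The main obstacle I expect is arguing cleanly that the global budget does not invalidate the disconnected analysis: one has to observe that the only way Algorithm~\ref{alg:verify-connectivity} can output \textsc{True} despite an unrealized edge $e^\star$ being present is if the walk on $e^\star$ actually reaches $+c$, and this event is controlled independently of which edges the algorithm happened to examine first. Everything else is a by-the-book application of gambler's ruin for the disconnected direction and Wald + Markov for the connected direction.
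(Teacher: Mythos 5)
Your proposal is correct and takes essentially the same route as the paper: both analyses model the per-edge counter as a biased $\pm 1$ random walk, bound the probability that an unrealized edge ever hits $+c$ by $(p/(1-p))^c \leq \delta$, bound the expected hitting time of a realized edge by $c/(1-2p)$, and finish with Markov's inequality against the global budget. The only difference is cosmetic — you derive the two random-walk estimates via the exponential martingale with optional stopping and Wald's identity, while the paper solves the equivalent first-step recurrences for the potential functions $\Phi(x)$ and $g(x)$; your handling of the budget interaction in the disconnected case (the algorithm can only output \textsc{True} if $e^\star$ actually hit $+c$, an event whose probability is controlled regardless of how budget was spent elsewhere) is in fact slightly cleaner than the paper's product-of-probabilities step.
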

The proof of the theorem is deferred to
Section~\ref{appendix:sec_verify} of the appendix. 
As a Corollary of Theorem~\ref{thm:verify-connectivity-linear}, we get
that, for constant probability guarantees, Algorithm~\ref{alg:verify-connectivity}
requires $O(n)$ queries.

\begin{corollary}
    Algorithm~\ref{alg:verify-connectivity} correctly classifies connected
trees with probability $.99$ and disconnected trees with probability $.99$, using $O(n)$ queries.
\end{corollary}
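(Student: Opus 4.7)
The plan is to treat the counter maintained for each edge as a biased random walk, and to analyze the two error regimes separately, since the sign of the drift flips depending on whether the edge is realized. Correctness then splits into two independent claims --- the probability of incorrectly reporting ``True'' on a disconnected tree is at most $\delta$, and the probability of incorrectly reporting ``False'' on a connected tree is at most $\e$ --- while the query bound follows from the budget formula together with $c := \lceil \log_{(1-p)/p}(1/\delta)\rceil = \Theta(\log(1/\delta))$.

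For the disconnected case, observe that the algorithm outputs ``True'' only if the counter for every edge reaches the threshold $c$ before the global budget is exhausted. A disconnected tree contains at least one non-realized edge $e^\star$, for which each query returns ``Yes'' with probability $p<1/2$, so the associated counter $(X_t)$ is a random walk with negative drift. The heart of this step is a standard gambler's-ruin argument: using the exponential martingale $((1-p)/p)^{X_t}$ together with optional stopping (first stopping at $\tau_c \wedge \tau_{-k}$ and then letting $k\to\infty$), the probability that $(X_t)$ starting from $0$ ever reaches $+c$ is at most $(p/(1-p))^c$, which is at most $\delta$ by the choice of $c$. Since the algorithm can output ``True'' only if this event occurs on $e^\star$, the probability of an incorrect ``True'' is at most $\delta$.

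For the connected case, every edge is realized, so for each edge $i$ the counter is a random walk with drift $1-2p>0$. Let $T_i$ denote the (unconstrained) number of queries until the counter first hits $c$. Wald's identity gives $\E{T_i} = c/(1-2p)$, so $\E{\sum_{i=1}^n T_i} = nc/(1-2p)$. Since the budget $B$ satisfies $B \ge \E{\sum T_i}/\e$ by construction, Markov's inequality yields $\sum_i T_i < B$ with probability at least $1-\e$. On this event, the inner loop for every edge terminates via its counter reaching $c$ rather than via budget exhaustion, so the algorithm outputs ``True''. The total query count never exceeds $B = O(nc/\e) = O((n/\e)\log(1/\delta))$, matching the claimed bound.

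The main obstacle is the gambler's-ruin bound in the disconnected case: one must justify optional stopping at the potentially infinite hitting time $\tau_c$, which is handled by the two-barrier stopping trick sketched above (the walk drifts to $-\infty$ on $\{\tau_c=\infty\}$). A second subtle point is that the budget is global rather than per-edge, but this does not affect either bound: in the connected case only the total $\sum T_i$ enters the argument, and in the disconnected case the bad event ``counter for $e^\star$ reaches $c$'' depends only on queries made to $e^\star$, so the analysis is indifferent to how budget is allocated across the other edges.
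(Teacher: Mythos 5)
Your proof is correct, but it takes a different route from the paper. The paper obtains this corollary immediately by instantiating Theorem~\ref{thm:verify-connectivity-linear} with $\e = \delta = 0.01$, whereas you re-derive the full theorem from scratch. Within that re-derivation the decomposition is the same as the paper's (bound the two misclassification probabilities separately, then bound the query count via the global budget), but the toolkit differs: for a non-realized edge you bound the probability of ever hitting $+c$ via the exponential martingale $((1-p)/p)^{X_t}$ and a two-barrier optional-stopping argument, while the paper solves a first-step-analysis recurrence for the hitting probability $\Phi$; for a realized edge you extract $\E{T_i}=c/(1-2p)$ from Wald's identity, while the paper solves a second recurrence for the expected hitting time $g$ and discards a free constant. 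Both routes yield identical bounds, and then Markov's inequality closes the connected case in the same way. Your martingale version is arguably cleaner: it avoids writing down and solving the difference equations, makes explicit why ``unbounded-budget'' probabilities upper-bound the relevant events, and pins down the constants without having to argue about $\tilde A$. The only thing you leave implicit is the final instantiation $\e=\delta=0.01$, which is what turns your $O\bigl(\frac{n}{\e}\log\frac{1}{\delta}\bigr)$ bound and $(1-\e,1-\delta)$ guarantees into the stated $O(n)$ and $0.99$; this is trivial but worth stating since it is the entire content of the corollary relative to the theorem.
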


\section{Two-sided error oracle}\label{sec:two_sided}
Moving on to our main result, we show how to find a realized spanning tree in the 2-sided error oracle. Recall that in this case, the oracle may give false responses either when the edge is realized or when it is non-realized. The main result for this regime is the following theorem.

\begin{theorem}
 In the 2-sided error regime, there exists an algorithm that finds a realized spanning tree with high probability, using $O(m \log m)$ queries in a moldgraph of $m$ edges. 
 Moreover, no algorithm can do better than $\Omega(m \log m)$.
 \end{theorem}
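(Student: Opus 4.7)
The plan is a standard amplification argument. For every edge $e$ in the moldgraph, I would issue $c\log m$ independent queries (where $c$ depends only on the noise rate $p$) and label $e$ as ``realized'' iff a strict majority of the answers are ``Yes''. Since $p < 1/2$ is a constant, a Chernoff bound makes each label correct with probability at least $1-1/m^3$, and a union bound over the $m$ edges says every label is simultaneously correct with probability at least $1-1/m^2$. Conditioned on that event, running any deterministic linear-time spanning-tree routine on the edges labeled ``realized'' returns a spanning tree of $G'$, at a total cost of $O(m\log m)$ queries.

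\paragraph{Lower bound.} For the matching lower bound I would exhibit a moldgraph on which the algorithm must, in effect, solve $\Theta(m)$ essentially independent subproblems, each of which already costs $\Omega(\log m)$ queries. The natural candidate is the friendship (``windmill'') graph $F_k$ obtained by gluing $k$ triangles at a common apex $v_0$, which has $n=2k+1$ vertices and $m=3k$ edges. In each triangle $t_i$, the adversary independently chooses one of its three edges uniformly at random to be non-realized and declares the other two realized. The resulting subgraph is always connected through $v_0$, and any spanning tree of it must use exactly the two realized edges of each $t_i$; hence the algorithm is forced to identify the missing edge in every triangle.

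\paragraph{Per-triangle hypothesis test.} Since the triangles are edge-disjoint and the oracle noise is independent across queries, the transcript of queries inside $t_i$ is a sample from one of three possible product distributions, one for each choice of the hidden missing edge, whose pairwise KL divergences are $\Theta((1-2p)^2)$ per query. To succeed globally with constant probability, the algorithm must succeed on each triangle with probability at least $1-O(1/k)=1-O(1/m)$, which by standard information-theoretic lower bounds (Le Cam / Fano / data-processing inequality) forces it to spend $\Omega(\log m)$ queries inside $t_i$ in expectation. Summing over the $k=\Theta(m)$ triangles and invoking Yao's principle to translate the average-case bound over the random adversary into a worst-case bound against randomized algorithms yields $\Omega(m\log m)$.

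\paragraph{Main obstacle.} The upper bound is essentially template; the delicate part is the lower bound, where one must rule out adaptive strategies that shift queries across triangles. The key leverage is that oracle responses from different triangles are statistically independent, so the expected query count decomposes additively across gadgets and the per-triangle hypothesis-testing bound can then be applied separately in each one. Calibrating the per-triangle failure probability, so that a constant overall failure probability translates into an $\Omega(\log m)$ per-triangle requirement (rather than merely $\Omega(1)$), is the step where the constants are most delicate.
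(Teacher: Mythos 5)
Your upper bound is the same as the paper's: query each edge $\Theta(\log m)$ times, take majority vote, Chernoff and union bound, then run any spanning-tree routine on the surviving edges; there is nothing to add there.

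Your lower bound takes a genuinely different route. The paper uses a path-like moldgraph with parallel edge pairs (equivalently a $2\times n$ grid with the vertical edges given for free), exactly one edge of each pair realized, and argues combinatorially: define a ``bad pair'' as one whose $\Theta(\log n)$ queries see a majority of erroneous responses, show a bad pair exists with probability $1-1/n$, and argue that an algorithm that does not allocate $\Omega(\log n)$ queries to a $(1-o(1))$-fraction of pairs must miss the bad one. You instead use the friendship graph $F_k$ with one uniformly random missing edge per triangle, so that the realized graph is always a tree (thus the algorithm has no freedom in what to output), reduce each triangle to a $3$-way hypothesis test with per-query KL divergence $\Theta((1-2p)^2)$, and invoke Fano/Le Cam plus Yao's principle. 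Both constructions are planar with $m=\Theta(n)$ edge-disjoint independent gadgets, so they play the same structural role; yours has the cosmetic advantage of being a simple graph from the start, while the information-theoretic framing is arguably cleaner and generalizes more readily. The place where your sketch is thinnest is the step you yourself flag: ``must succeed on each triangle with probability $1-O(1/k)$'' is not literally forced, since the constraint is only $\sum_i q_i = O(1)$ on the per-triangle failure probabilities $q_i$. You need to combine the per-gadget bound $T_i = \Omega(\log(1/q_i))$ with a convexity argument showing $\sum_i \log(1/q_i)$ is minimized (subject to $\sum_i q_i \le C$) when all $q_i$ are equal, yielding $\Omega(k\log k)$ total. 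You also need to justify the additive decomposition of expected query counts across gadgets for an adaptive algorithm, which follows from independence of the gadgets' oracle responses but deserves an explicit sentence. The paper's own argument elides analogous issues (its ``pick an $\alpha$-fraction of pairs'' step is similarly informal), so the two proofs are at a comparable level of rigor; with the convexity step spelled out yours would be complete and, to my eye, somewhat more systematic.
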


To show this theorem, we separately show the upper and lower bounds in Lemmas~\ref{thm:2-sided-error-central} and \ref{lem:2-sided-error-lb}. Combining them we immediately get the theorem.

\subsection{Upper Bound}
In order to show the upper bound (Lemma~\ref{thm:2-sided-error-central}), we 
describe an algorithm that achieves this query bound and prove its correctness
in Lemma~\ref{lem:naive-two-sided-high-prob}

\begin{lemma}\label{thm:2-sided-error-central}
In the 2-sided error regime, there exists an algorithm that finds a realized spanning tree with high probability, using $O(m \log m)$ queries in a moldgraph of $m$ edges.
\end{lemma}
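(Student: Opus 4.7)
The plan is a straightforward majority-vote amplification applied uniformly to every edge of the moldgraph. For each edge $e \in E$, the algorithm queries the oracle $k$ times independently, where $k = \Theta\bigl(\log m / (1-2p)^2\bigr)$, and classifies $e$ as realized if and only if a strict majority of the $k$ responses is ``Yes''. Since the per-query error probability is a constant $p < 1/2$, a standard Chernoff/Hoeffding bound shows that a single edge is misclassified with probability at most $m^{-c}$ for any constant $c$ we like, provided the hidden constant in $k$ is taken large enough (depending on $c$ and $p$).

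By a union bound over the $m$ edges, with probability at least $1 - m^{-(c-1)}$ every edge is classified correctly. Conditioned on this event, the algorithm has effectively recovered the realized subgraph $G'$ exactly; since $G'$ is assumed connected by the problem definition, the algorithm can output any spanning tree of the subgraph of edges classified as realized (computed by BFS or DFS) using no further queries. The total query cost is $m \cdot k = O(m \log m)$, matching the bound in the lemma.

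I do not foresee a serious technical obstacle; the only step requiring care is picking $k$ large enough that the per-edge error probability is polynomially small in $m$, so that a single union bound covers all edges simultaneously. This is precisely the distinction between a ``constant probability of success'' guarantee (which would allow $k = O(1)$ per edge and $O(m)$ total queries, but could misclassify some edges) and a ``high probability of success'' guarantee over all $m$ edges, which forces $k = \Theta(\log m)$. Verifying that the Chernoff inequality gives the stated per-edge bound, and then tracking constants through the union bound, are the routine calculations that remain.
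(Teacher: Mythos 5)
Your proposal is correct and is essentially the same argument as the paper's: the paper's Naive Algorithm queries each edge $\Theta(\log m)$ times, classifies by majority vote, bounds the per-edge misclassification probability by $1/m^2$ via Hoeffding, and then concludes by a union bound (Bernoulli's inequality) over all $m$ edges. (As a minor aside, your exponent $\Theta(\log m / (1-2p)^2)$ is the correct form from Hoeffding; the paper's written bound drops a square on $(1/2-p)$, which is a harmless constant-factor typo when $p$ is a constant.)
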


Our algorithm, described below, uses the same idea as the naive algorithm for the connectivity verification problem.

\paragraph{Naive algorithm} The algorithm is as follows. First, query each edge of the moldgraph $O(\log{m})$ times. Second, by treating all edges with more ``Yes" than ``No" responses as realized, compute a spanning tree on the discovered realized subgraph. If the graph is disconnected, output any spanning tree of the moldgraph.

The algorithm performs $O(m \log m)$ queries in total. The following lemma shows the correctness of the algorithm, and its proof is deferred to Section~\ref{appendix:sec_2sided} of the appendix. 

\begin{restatable}{lemma}{naiveTwoSided}\label{lem:naive-two-sided-high-prob}
The Naive algorithm finds a realized spanning tree with high probability.
\end{restatable}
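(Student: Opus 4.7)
The plan is a clean \emph{concentration plus union bound} argument. Fix the per-edge query count to be $k = c\log m$ for a constant $c$ to be chosen, so that the algorithm performs $mk = O(m\log m)$ queries in total. For each edge $e$, let $X_e$ denote the number of ``Yes'' answers observed out of the $k$ queries on $e$. Conditioned on whether $e$ is realized or not, $X_e$ is a sum of independent Bernoulli random variables, each correct (i.e.\ ``Yes'' if realized, ``No'' if not) with probability $1-p > 1/2$. Thus $\mathbb{E}[X_e]$ is on the correct side of $k/2$ by an additive gap of at least $(1/2 - p)k$.

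Next I would invoke a Chernoff/Hoeffding bound to control the majority vote. For any single edge $e$, the probability that the majority of $k$ responses is wrong (i.e.\ that $X_e$ crosses $k/2$ to the wrong side of its expectation) is at most $\exp\bigl(-2(1/2-p)^2 k\bigr) = m^{-2c(1/2-p)^2}$. Since $p < 1/2$ is a constant, choosing the constant $c$ large enough makes this bound at most $1/m^{\alpha+1}$ for any desired $\alpha > 0$. A union bound over the $m$ edges then gives that \emph{every} edge is correctly classified with probability at least $1 - 1/m^{\alpha}$, which is high probability.

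Conditioned on this event, the ``discovered realized subgraph'' in Step 2 of the Naive algorithm coincides exactly with the true realized subgraph $G'$. Since the problem setup (Section~\ref{sec:prelims}) guarantees that the adversary chooses $G'$ to be connected, computing any spanning tree of the discovered graph in Step 3 yields a realized spanning tree, and the fallback of Step 4 is never triggered. Hence with probability $1 - 1/\mathrm{poly}(m)$ the algorithm returns a realized spanning tree, establishing the claim.

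I do not anticipate any real obstacle here, as both ingredients are standard; the only care needed is to make the constant in $O(\log m)$ depend on $p$ through the factor $1/(1/2-p)^2$ so that the Chernoff exponent beats the union bound over $m$ edges.
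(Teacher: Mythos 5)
Your proposal is correct and follows essentially the same approach as the paper: apply a Hoeffding/Chernoff bound to show each edge's majority vote errs with probability at most $1/m^2$ (or $1/\mathrm{poly}(m)$), then union-bound over all $m$ edges. In fact your Hoeffding exponent $-2(1/2-p)^2 k$ is stated more carefully than the paper's, which drops the square on $(1/2-p)$; this is immaterial to the conclusion since both choices of $k$ are $\Theta(\log m)$ with $p$-dependent constants.
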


\begin{figure}
\centering
\begin{subfigure}{\linewidth}
    \centering
    \includegraphics[width=0.73\linewidth]{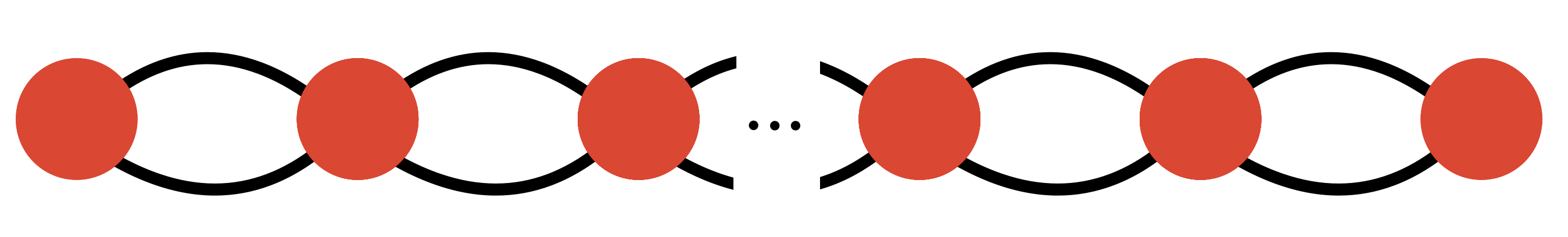}
    \caption{The moldgraph used for the proof of Lemma~\ref{lem:2-sided-error-lb}}
    \label{subfig:2sided-moldgraph}
\end{subfigure}%
\newline
\begin{subfigure}{\linewidth}
    \centering
    \includegraphics[width=0.73\linewidth]{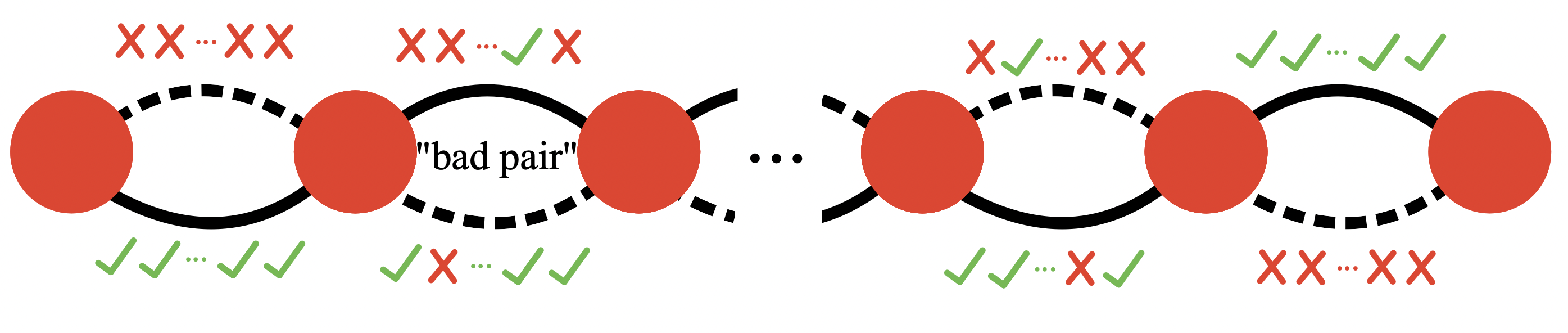}
    \caption{A realized graph from the family mentioned in the proof of
    Lemma~\ref{lem:2-sided-error-lb}. The dotted lines are non-realized edges
    and by the edges lie the answers of the 2-sided noisy oracle.}
    \label{subfig:2sided-LB}
\end{subfigure}
\caption{}
\label{fig:2sided-LB}
\end{figure}

\subsection{Lower Bound}

Switching gears towards the lower bound, we give an instance showing that we cannot hope for anything better than the $O(m\log m)$ upper bound shown in the previous section. As we expected, this is a strictly
harder problem than verifying the connectivity of a tree, since
in this case we have to identify a realized edge in each independent
cut of the moldgraph.

\begin{lemma}\label{lem:2-sided-error-lb}
In the 2-sided error regime, there exists a graph where any algorithm requires $\Omega (m \log m)$ queries to discover a spanning tree with constant failure probability.
\end{lemma}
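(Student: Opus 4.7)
The plan is to exhibit a moldgraph for which finding a realized spanning tree reduces to solving $\Theta(m)$ independent binary hypothesis-testing problems, each requiring $\Omega(\log m)$ queries. The overall shape of the argument is: (i) give a distributional lower bound via Yao's minimax, (ii) localize to each bundle using independence, (iii) apply a standard two-point testing bound per bundle, and (iv) aggregate by a convexity/budget argument.

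For the construction I would take the moldgraph $G$ on vertices $v_0,\ldots,v_{n-1}$ where between every consecutive pair $v_i,v_{i+1}$ there are exactly two parallel edges $e_i^a,e_i^b$, so that $m=2(n-1)$. The adversary's distribution over realized subgraphs picks, for each $i\in\{0,\ldots,n-2\}$, one of $\{e_i^a,e_i^b\}$ uniformly and independently at random; the resulting realized subgraph is always already a spanning tree, and any correct output must identify the realized edge in each of the $k:=n-1=\Theta(m)$ bundles.

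For the per-bundle lower bound, I fix a bundle $i$ and condition on all queries outside of it. Queries to $e_i^a$ and $e_i^b$ produce Bernoulli$(1-p)$ or Bernoulli$(p)$ samples depending on which edge is realized, so the algorithm must distinguish two product distributions whose per-query KL divergence is $(1-2p)\log\frac{1-p}{p}=\Theta(1)$. By a standard Le Cam two-point / Neyman--Pearson argument, any tester using $q_i$ queries inside bundle $i$ has error probability $p_i\ge \tfrac{1}{2}\exp(-c\,q_i)$ for a constant $c=c(p)>0$. Because the bundles are mutually independent under the adversary's prior, conditioning on the transcript in other bundles does not shift the posterior inside bundle $i$, so adaptive interleaving of queries across bundles does not help. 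To aggregate, for constant overall success I need $\sum_i p_i=O(1)$, hence $\sum_i \exp(-c\,q_i)=O(1)$; convexity of $x\mapsto e^{-cx}$ (Jensen's inequality) shows that for a fixed total budget $Q=\sum_i q_i$ the left-hand side is minimized at $q_i=Q/k$ for all $i$, giving $k\exp(-cQ/k)=O(1)$ and therefore $Q=\Omega((k/c)\log k)=\Omega(m\log m)$.

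The step I expect to require the most care is the reduction from the worst-case adversary in the lemma statement to the distributional one above: I would invoke Yao's minimax principle, turning a deterministic algorithm against a random realized subgraph into a lower bound against any (randomized) algorithm and a worst-case realized subgraph. A second subtlety is converting the per-bundle Le Cam bound into the ``$\sum_i p_i=O(1)$'' constraint in the presence of adaptivity; the cleanest route is to use $\Pr[\text{success}]\le \Pr[\text{all bundles correct}]\le \prod_i(1-p_i)\le \exp(-\sum_i p_i)$, which is valid bundle-by-bundle precisely because of the independence built into the construction.
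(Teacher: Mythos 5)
Your construction is the same as the paper's (a path-like multigraph with two parallel edges between consecutive vertices, one realized per pair), but the methodology differs. The paper argues directly and somewhat informally: it defines a ``bad pair'' as one whose $\Theta(\log n)$ queries yield a majority of wrong answers, shows such a pair exists with probability $1 - 1/n$, and then asserts that the algorithm must pre-commit to performing $\Omega(\log n)$ queries on an $\alpha$-fraction of pairs, concluding $\alpha = 1 - o(1)$. Your route instead packages the argument in standard information-theoretic lower-bound machinery: Yao's minimax to pass to a uniform prior over realizations, a Le Cam / Bhattacharyya two-point testing bound $p_i \ge \tfrac{1}{2}\exp(-cq_i)$ per bundle, and a Jensen/convexity aggregation $\sum_i \exp(-cq_i) = O(1) \Rightarrow Q = \Omega(k\log k)$. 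Your version is more modular and arguably more rigorous --- in particular it replaces the paper's ``optimal strategy is to split evenly and take majority'' and ``must pre-commit to an $\alpha$-fraction'' claims with the cleaner posterior-independence observation and a convexity argument over the (possibly adaptive, random) query budget. The one place that still needs care is turning the per-bundle testing bound into $\Pr[\text{all bundles correct}] \le \prod_i(1-p_i)$ under adaptivity; your remark that the posterior factorizes across bundles is the right key, but a fully rigorous write-up would make explicit that the per-bundle error bound holds conditionally on the transcript outside the bundle and on the (random) $q_i$, and then take an outer expectation (Jensen again) before aggregating. Net: correct construction, genuinely different and somewhat more principled proof technique, same bound.
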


\begin{proof}[Proof of Lemma~\ref{lem:2-sided-error-lb}]
We consider the graph in Figure~\ref{fig:2sided-LB} which has $n+1$ vertices and $2 n$ edges. Between any two consecutive vertices there exist 2 parallel edges, one of which is realized and the other is not. Note that this moldgraph is a multigraph for simplicity of presentation. But the arguments hold analogously for a $2$-by-$n$ grid moldgraph, if we additionally give our algorithms the extra information that all vertical edges are realized. This only makes an algorithm more powerful as, in the worst-case, it can choose to ignore this information and execute independently. Then, we show that even with this extra knowledge, no algorithm can achieve a better query complexity than the desired bound.

Any algorithm on this graph will have to treat each edge-pair independently of the others, because uncovering a realized edge in one pair does not give any information on the solution to other pairs. Moreover, the optimal strategy for any algorithm is to query both edges in a pair an equal number of times and then pick the one with the majority of positive responses. Otherwise, picking the one with the least positive responses gives us smaller probability of success.

Let us suppose that the algorithm performs $k$ queries on a specific edge-pair. If more than $k/2$ of them give false responses then the wrong edge will be picked as realized. Consequently, the algorithm will have to perform more than $k$ queries on this edge-pair to discover the correct edge. We set $k = \Theta(\log (n / \log n) ) = \Theta(\log n)$ and call such a faulty edge-pair a ``bad" pair. Then, by the fact that false queries follow a binomial distribution, we get

$$\Pr{}{\text{bad}} > p^{k/2} > \log n / n$$

\noindent Thus the probability that \textit{at least} one such ``bad pair" exists is
\begin{align*}
\Pr{}{\text{bad exists}} & = 1 - \Pr{}{\text{all good}} \\ 
& \geq 1 - \left( 1 - \frac{\log n}{n} \right)^n \\
& \geq 1 - \frac{1}{n} \\
\end{align*}
Assuming that there exists such a ``bad pair" then the algorithm will have to perform $\Omega(\log {n})$ queries on it. However, there is no way to know from the beginning which pair it will be. This means that any algorithm in this case will have to pick an $\alpha$-fraction of edge-pairs on which to perform these many queries, for some (not necessarily constant) $\alpha$. Define $Q_b$ to be the event that the algorithm performs $\Omega(\log n)$ queries on the ``bad pair". Then, $\Pr{}{Q_b} = \alpha$ and the probability of failure for the algorithm when a bad pair exists is
$$
\Pr{}{\text{fail} | \text{bad exists}} = \alpha \Pr{}{\text{fail} | Q_b} + (1 - \alpha) \geq (1 - \alpha)
$$

\noindent Thus, the total probability of failure to discover a spanning tree will be

\begin{align*}
    \Pr{}{\text{fail}} &\geq \Pr{}{\text{fail} | \text{bad exists}} \Pr{}{\text{bad exists}}\\
    &\geq (1 - \alpha) \left( 1 - \frac{1}{n} \right)
\end{align*}

If we want the failure probability to be constant then it must hold that $\alpha = 1 - o(1)$. Consequently, we perform $\Omega(\log n)$ queries on $\alpha n = \Theta(n)$ edges, giving us an $\Omega(n \log n)$ lower bound for this graph instance. The desired lower bound comes from the fact that any algorithm performing better than $\Omega(m \log m)$ queries would perform better than $\Omega(n \log n)$ in this instance, as it has $m = O(n)$.
\end{proof}

\section{One-sided error oracle: False Negatives}\label{sec:FN}
Moving on to the one-sided error regime, we first consider the case of False Negative errors in the oracle.
Our main contribution is an algorithm for the case of sparse and minor-closed graph families. We also show how we can obtain an algorithm for general graphs, that still performs well on $\rho$-sparse graphs, even when this property is unknown to us.

Finally, we show how to combine these results to obtain an algorithm for planar graphs that uses $O(m)$ queries to find a spanning tree, while no algorithm can do better (Corollary~\ref{cor:planar-fn}). This also implies better guarantees for other special families of graphs, as shown in Corollary~\ref{cor:other-special-fn}.

\begin{theorem}\label{thm:combined-fn-algo}
In the 1-sided False Negative regime, given a moldgraph of $m$ edges and $n$ nodes, there exists an algorithm that finds a realized spanning tree using $O(m\log n)$ queries in expectation. If the moldgraph is $\rho$-sparse and minor-closed, the algorithm uses $O(\rho m)$ queries in expectation.
\end{theorem}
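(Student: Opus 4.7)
The plan is to design a single iterative algorithm and to analyze its expected cost in two different ways. The algorithm maintains a multigraph $G_i$, starting from $G_0 = G$; at each iteration it selects a vertex $v$ of minimum degree in $G_i$ (where the degree counts super-edges), probes the super-edges incident to $v$ in rounds until some query returns \emph{Yes}, adds the certified realized edge to the tree under construction, and contracts that edge (merging parallel super-edges as in the preliminaries) to produce $G_{i+1}$. After $n-1$ contractions we obtain a realized spanning tree of $G$.

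Correctness is immediate from the False Negative semantics: a \emph{Yes} answer certifies that the queried edge is realized, so every edge added to the tree is truly realized. Progress is guaranteed because the realized subgraph $H$ is connected by assumption, and this property is preserved under contraction of realized edges; hence at every step the chosen $v$ has at least one incident super-edge containing a realized edge. Since a realized edge returns \emph{Yes} on each query with probability $1-p$, probing $v$'s incident super-edges with one query per super-edge per round succeeds in $O(1)$ expected rounds. The expected per-iteration query cost is therefore $O(\deg_{G_i}(v))$.

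For the $\rho$-sparse minor-closed case, the key step is to observe that contractions (together with the deletion of parallel edges that accompanies each contraction) are minor operations, so every $G_i$ stays in $\mathcal{F}$ and is still $\rho$-sparse. By a standard averaging argument, a $\rho$-sparse graph on $n-i$ vertices has a vertex of degree at most $2\rho$, and since $v$ is chosen of minimum degree we pay $O(\rho)$ expected queries per iteration. Summed over $n-1$ iterations this gives $O(\rho n) = O(\rho m)$ expected queries, using $n \leq m+1$ for the connected moldgraph.

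For general graphs, the bound follows from a harmonic-sum argument: the number of super-edges in $G_i$ never exceeds $m$, while $|V(G_i)| = n-i$, so the minimum degree in $G_i$ is at most $2m/(n-i)$. Summing the per-iteration cost over iterations gives
\[
\sum_{i=0}^{n-2} O\!\left(\frac{m}{n-i}\right) \;=\; O\!\left(m \sum_{k=1}^{n-1} \frac{1}{k}\right) \;=\; O(m \log n).
\]
The main obstacle is to verify carefully that one round at $v$ really costs only $O(\deg_{G_i}(v))$ queries while the expected number of rounds is $O(1)$, despite the fact that a super-edge produced by past contractions may contain many original parallel edges, only some of which need be realized. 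Making this precise requires a probing strategy whose analysis depends only on the super-edge count at $v$ (and the $1-p$ success probability on a realized query), and not on the internal sizes of those super-edges, so that the two averaging arguments above suffice to yield the stated bounds.
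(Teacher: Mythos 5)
Your approach differs from the paper's in a structural way: you propose a \emph{single} iterative contraction algorithm (essentially Algorithm~3, \textsc{SolveSparseFN}) and analyze its cost twice, once via the $2\rho$ minimum-degree bound for sparse minor-closed families and once via a harmonic-sum bound for general graphs. The paper instead proves the theorem by running \emph{two separate algorithms in parallel} --- the Naive round-robin algorithm (Lemma~\ref{lem:naive_FN}, $O(m\log n)$ by a coupon-collector argument over the $n-1$ realized tree edges) and \textsc{SolveSparseFN} ($O(\rho m)$ by Lemma~\ref{lem:sparse-fn-algo}) --- interleaving their queries and stopping when either terminates. Your sparse-case analysis is essentially the paper's (same vertex choice, same $\deg(u)\le 2\rho$ averaging, same telescoping via minor-closedness).

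The general-case half of your proposal has a genuine gap, and it is exactly the issue you flag in your last paragraph, but it is more than a verification nuisance. The per-iteration cost of probing at a minimum-degree vertex $v$ is not $O(\deg_{G_i}(v))$: Lemma~\ref{lemma:parallel-discovery} gives $O(\deg_{G_i}(v)\cdot|E_f|)$ in expectation, where $|E_f|$ is the number of parallel simple edges in the super-edge that gets contracted, and this factor can grow with $i$. There is no probing strategy whose expected cost at $v$ depends only on the super-edge count: if a super-edge consists of $s$ parallel simple edges of which exactly one is realized, any strategy must query that super-edge $\Omega(s)$ times in expectation before it can even hit the realized edge once. Concretely, take $G = K_n$ with the realized subgraph a Hamiltonian path $v_1 \cdots v_n$, and suppose ties in $\arg\min\deg$ are broken by always preferring the growing super-vertex $A=\{v_1,\ldots,v_{i+1}\}$. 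Then at step $i$ all $n-i$ vertices have super-degree $n-i-1$, $A$ is eligible, $N(A)$ contains a unique super-edge with a realized edge, namely $\{A,v_{i+2}\}$, which has $|E_f|=i+1$ parallel edges only one of which is realized. Its discovery costs $\Theta\bigl((n-i-1)(i+1)\bigr)$ in expectation, and summing over $i$ gives $\Theta(n^3)$, whereas $m\log n = \Theta(n^2\log n)$. So \textsc{SolveSparseFN} alone is not an $O(m\log n)$-query algorithm in general; your harmonic sum $\sum_i O\bigl(m/(n-i)\bigr)$ silently drops the $|E_f|$ factor.

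There are natural fixes that buy one bound but not the other: round-robin over \emph{simple} edges incident to a vertex of minimum simple-edge degree gives $O(D_v)\le O(m_i/(n-i))$ per iteration and hence $O(m\log n)$ overall, but then $D_v$ is no longer controlled by $\rho$ alone and the telescoping $O(\rho m)$ argument breaks. This is precisely the tension the paper resolves by interleaving two algorithms, each optimal in one regime, rather than trying to analyze one algorithm twice. If you keep your single-algorithm framing, you need to either (a) prove that some tie-breaking and probing rule simultaneously yields both bounds, which the $K_n$ example shows is delicate, or (b) adopt the paper's interleaving trick and cite the coupon-collector bound for the general case.
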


\begin{proof}[Proof of Theorem~\ref{thm:combined-fn-algo}]
To create such an algorithm, we can run in parallel both the Naive Algorithm presented in the next subsection (Lemma~\ref{lem:naive_FN}) and Algorithm~\ref{alg:sparse-fn-algo} for sparse minor-closed graphs. In particular, it suffices to let the two algorithms perform their queries alternately and independently of each other.

If the moldgraph is $\rho$-sparse and minor-closed, then by Lemma~\ref{lem:sparse-fn-algo}, Algorithm~\ref{alg:sparse-fn-algo} will perform $O(\rho m)$ queries in expectation before finding a spanning tree. For each one of these queries, we have just performed one extra query for the naive algorithm. Thus, eventually the combined algorithm will find the spanning tree in the same query complexity.

Otherwise, in general graphs, by using Lemma~\ref{lem:naive_FN} for the Naive Algorithm and a similar argument as before, we get $O(m \log n)$ queries in expectation.
\end{proof}

\subsection{Warm-up: naive algorithm}
We start by describing the naive algorithm used for general graphs, that obtains the $O(m\log n)$ guarantee. The algorithm proceeds in rounds, where in each round it performs one query on all $m$ edges of the moldgraph. It repeats this process until $n-1$ realized edges forming a spanning tree are discovered. Note that in the case of FN queries we can be completely certain when a realized edge is discovered.

\begin{restatable}{lemma}{naiveFN}\label{lem:naive_FN}
The Naive Algorithm finds a spanning tree while performing $O(m\log n)$ queries
on expectation. 
\end{restatable}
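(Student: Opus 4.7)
The plan is to fix an arbitrary spanning tree $T^{*}$ of the realized subgraph $G'$ (which exists because the adversary selects a connected subgraph) and upper bound the expected number of rounds until every edge of $T^{*}$ has received at least one ``Yes'' from the oracle. Since the algorithm terminates as soon as the set of discovered realized edges contains \emph{any} spanning tree, discovery of all of $T^{*}$ is sufficient for termination, so this yields the desired upper bound.

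The first key observation is that, in the False Negative regime, a ``Yes'' is always truthful, so an edge $e \in T^{*}$ is classified as realized the first time its query returns ``Yes''. For $e \in T^{*}$, let $X_e$ denote the round in which this happens. Because each round queries $e$ once and the oracle independently returns ``Yes'' with probability $1-p$ on every realized edge, $X_e$ is geometrically distributed with parameter $1-p$. The algorithm's number of rounds $R$ satisfies $R \le \max_{e \in T^{*}} X_e$.

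I would then bound $\mathbb{E}[\max_{e \in T^{*}} X_e]$ by a standard union-bound tail argument. For any integer $k \ge 0$, a union bound over the $n-1$ edges of $T^{*}$ gives $\Pr{}{\max_{e} X_e > k} \le (n-1)\, p^{k}$. Setting $K = \lceil \log_{1/p}(n-1) \rceil = O(\log n)$ and splitting the tail sum,
\[
\mathbb{E}\lp[\max_{e} X_e\rp] \;=\; \sum_{k \ge 0} \Pr{}{\max_{e} X_e > k} \;\le\; K \;+\; \sum_{k > K} (n-1)\, p^{k} \;=\; O(\log n),
\]
where the last sum is at most $p/(1-p)$ by the choice of $K$ and a geometric series bound. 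Since the algorithm issues exactly $m$ queries per round, multiplying by $m$ yields $\mathbb{E}[\text{queries}] \le m \cdot \mathbb{E}[R] = O(m \log n)$.

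There is no real obstacle here; the argument is routine once the right reduction is chosen. The only points worth being careful about are (i) noting explicitly that in the FN regime a single ``Yes'' is conclusive, so $X_e$ really is just the first success of an independent Bernoulli sequence, and (ii) bounding $R$ by the cover time of a fixed $T^{*}$ rather than trying to analyze the (potentially earlier) termination time of the algorithm directly, which only strengthens the upper bound.
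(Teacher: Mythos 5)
Your proof is correct and takes essentially the same approach as the paper: the paper simply invokes ``a Coupon Collector's argument on the $n-1$ edges of $T$'' to get $O(\log n)$ expected rounds, and your argument is a precise instantiation of that one-liner, bounding the expected maximum of $n-1$ i.i.d.\ geometric random variables with parameter $1-p$ via a union bound and a split tail sum. Nothing is missing; you've just written out the details the paper leaves implicit.
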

The proof is deferred to section~\ref{appendix:sec_fn} of the Appendix.

\subsection{An Algorithm for $\rho$-sparse graphs}
We present the algorithm used to obtain better guarantees for the family of $\rho$-sparse and minor closed graphs. The algorithm is presented in Algorithm~\ref{alg:sparse-fn-algo} and the guarantees in Lemma~\ref{lem:sparse-fn-algo}.

\begin{lemma}\label{lem:sparse-fn-algo}
Given a moldgraph of $m$ edges, belonging to a $\rho$-sparse and 
minor-closed family of graphs, Algorithm~\ref{alg:sparse-fn-algo} performs 
$O(\rho m)$ queries in expectation and uncovers a realized spanning tree with 
probability $1$.
\end{lemma}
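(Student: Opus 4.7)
My plan is to design an iterative contraction procedure and to argue its correctness and cost by maintaining two invariants. I would maintain a multigraph $G'$ initialized to the moldgraph, and at each step contract one realized edge into the output spanning tree. The first invariant is that $G'$ is a minor of the original moldgraph, hence still belongs to the $\rho$-sparse minor-closed family, so by averaging $G'$ always contains a vertex $v$ of super-edge degree at most $2\rho$. The second invariant is that the realized subgraph of $G'$ remains connected across iterations, because contracting a realized edge preserves connectivity; in particular, $v$ has at least one incident super-edge carrying a realized parallel edge.

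The per-iteration step would be to probe $v$'s neighborhood until the oracle returns ``Yes''. Since the oracle is FN, any ``Yes'' is an unconditional certificate that the queried parallel edge is realized, so it can be safely added to the output tree and contracted. After $n-1$ contractions $G'$ collapses to a single vertex, and the collected edges form a spanning tree of the realized subgraph with probability $1$, which gives correctness.

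For the cost analysis, I would show that the expected number of queries per contraction is $O(\rho)$, so that $n-1 \le m$ contractions yield $O(\rho m)$ expected queries in total. A round that queries one parallel edge in each of $v$'s at-most-$2\rho$ incident super-edges costs $O(\rho)$, and succeeds with at least constant probability because the realized super-edge at $v$ flips to ``Yes'' with probability at least $1-p > 1/2$ per realized parallel edge queried; geometric waiting then gives $O(\rho)$ expected queries per contraction.

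The main obstacle is the interplay between super-edges and parallel edges. Super-edges may accumulate many parallel copies from earlier contractions, and if a super-edge has only a tiny realized fraction, picking one parallel edge per round may not hit a realized one with constant probability. The cleanest fix I would pursue is to query all parallel edges in each super-edge touching $v$ per round and to amortize the extra queries against the parallel edges that are permanently discarded by the ensuing contraction; the accounting would then show that every parallel edge of the original moldgraph is queried only $O(1)$ times in expectation across the whole execution, recovering the $O(\rho m)$ bound. Verifying rigorously that the minor-closed property truly survives the multigraph contractions used by the algorithm (including the deletion of parallel copies and merging of super-edges) is the other place where I would be careful, since the sparsity of $G'$ underpins the entire per-step query budget.
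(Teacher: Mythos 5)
Your correctness argument and the two invariants (minor-closure preserves $\rho$-sparsity under contraction, hence a vertex of super-edge degree $\leq 2\rho$ always exists; the realized subgraph of $G'$ stays connected, so some super-edge at $v$ carries a realized parallel edge) match the paper. But the cost analysis has a genuine gap, and your attempted fix does not close it.

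You correctly flag that ``one parallel edge per super-edge per round, constant success per round'' fails when a super-edge has many non-realized parallel copies. Your fix --- query \emph{every} parallel edge in every super-edge of $N(v)$ each round, and amortize against the $|E_f|$ parallel edges discarded by the contraction --- breaks, because only the super-edge $E_f$ that is contracted has its parallel edges deleted; the other super-edges in $N(v)$ survive the contraction, their parallel edges persist (possibly growing as more super-edges merge into them), and they can be probed again on every future iteration in which the merged vertex is selected. Concretely, take a wheel $W_n$ where the adversary realizes only the outer cycle plus one spoke. Repeatedly contracting cycle edges into a supernode keeps its degree at $3$ while the spoke super-edge into the hub grows from $1$ to $n$ parallel edges, all non-realized; in your scheme each of these parallel copies is re-probed on every iteration, giving $\Theta(n^2)$ expected queries, whereas $\rho = O(1)$ and $m = \Theta(n)$, so the target is $O(n)$. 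The claim ``every original parallel edge is queried $O(1)$ times in expectation'' is therefore false.

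The missing idea is the specific query schedule of the \texttt{DISCOVER} subroutine: within each super-edge you probe parallel edges in a \emph{fixed cyclic order}, one per round, so a realized edge inside $E_f$ is hit once every $|E_f|$ rounds and the expected number of rounds is $O(|E_f|)$, not $O(1)$. This gives the per-contraction bound $O(\deg(v)\cdot|E_f|) = O(\rho|E_f|)$ rather than $O(\rho)$, and the amortization is then done against the $|E_f|$ parallel edges that the contraction permanently removes: the induction $4\rho|E_f| + 4\rho(m-|E_f|) = 4\rho m$ closes. Your flat ``$O(\rho)$ per contraction, hence $O(\rho n)$ total'' is in fact stronger than what is true and cannot be established; the correct per-step cost depends on $|E_f|$, and the $O(\rho m)$ bound is recovered precisely because the sum of $|E_f|$ over all contractions is at most $m$.
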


The basic primitive that our algorithm uses is a \texttt{DISCOVER} subroutine that cleverly orders queries in a way that it allows it to explore multiple edge sets in parallel. The subroutine is presented in more detail in Algorithm~\ref{alg:parallel-discovery} and the Lemma that follows.

\centerline{}

\centerline{}

\begin{algorithm}[tb]
\caption{DISCOVER($\mathcal{S}$): Discovers an edge 
in a collection $\mathcal{S}$ of edge sets.}
\label{alg:parallel-discovery}
\textbf{Input}: Set $\mathcal{S} = \{E_1, E_2, \ldots, E_k\}$\\
\textbf{Output}: Realized edge $e$
\begin{algorithmic}[1]
        \WHILE{no edge is found}
            \FOR{$i = 1$ \textbf{to} $k$}
                \STATE Let $e$ be the next edge in the cyclic order of $E_i$
                \STATE $q \gets$ Query edge $e$ \\
                \IF{$q$ is ``Yes"}
                    \STATE \textbf{return} $e$
                \ENDIF
            \ENDFOR
        \ENDWHILE
\end{algorithmic}
\end{algorithm}
\centerline{}

\begin{restatable}{lemma}{parallelDiscovery}\label{lemma:parallel-discovery}
Let $\mathcal{S}$ be a collection of $k$ sets of edges. Assume as well that there exists at least one set in $\mathcal{S}$ containing some realized edge, then Algorithm~\ref{alg:parallel-discovery} finds and returns such an edge using at most $2k|E_f|$ queries in expectation, where $E_f$ is the set containing the found edge.
\end{restatable}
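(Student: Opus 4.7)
My plan is to reduce the analysis to a single-set argument by exploiting the round-robin schedule of Algorithm~\ref{alg:parallel-discovery}. Fix any set $E^* \in \mathcal{S}$ that contains a realized edge (such a set exists by hypothesis, and $E_f$ itself is one example). I will show (a) the expected number of queries the algorithm directs to $E^*$ is at most $2|E^*|$, and (b) the round-robin schedule inflates the total query count by at most a factor of $k$. Combining these gives $\mathbb{E}[T] \le 2k|E^*|$, and instantiating with $E^* = E_f$ yields the stated bound.

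For (a), I exploit the cyclic order on $E^*$ imposed by line~3: the subsequence of queries the algorithm directs to $E^*$ scans its edges in a fixed cyclic order. I partition this subsequence into consecutive \emph{rounds} of $|E^*|$ queries, so that each round queries every edge of $E^*$ exactly once, and different rounds use fresh, independent oracle responses (since the noise is independent across queries, even for the same edge). In the False-Negative regime a realized edge yields ``Yes'' with probability $1-p$, so the probability that an entire round produces no ``Yes'' is at most $p^{r^*} \le p \le 1/2$, where $r^* \ge 1$ is the number of realized edges in $E^*$. Hence the number of rounds until the first ``Yes'' in $E^*$ is stochastically dominated by a geometric random variable with mean at most $1/(1-p) \le 2$, giving at most $2|E^*|$ queries to $E^*$ in expectation.

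For (b), each iteration of the outer WHILE-loop performs exactly one query per set, and the algorithm returns the moment a ``Yes'' is seen. Hence, if $Q^*$ denotes the number of queries the algorithm makes to $E^*$ and $T$ the total number of queries, then $T \le k Q^*$ (taking $E^* = E_f$ this is in fact exact, since the final partial iteration stops precisely when $E_f$ itself is queried). I then couple the actual execution with a hypothetical ``$E^*$-only'' process that reuses the same oracle responses for queries to $E^*$: the count $Q^*$ in the true algorithm is at most the single-set count above, because the only way the real algorithm can halt sooner is via a ``Yes'' from some other set, which strictly decreases $Q^*$. Taking expectations, $\mathbb{E}[T] \le k\,\mathbb{E}[Q^*] \le 2k|E^*|$, as claimed.

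The main subtlety I expect to state with care is the coupling between the true algorithm and the hypothetical single-set process on $E^*$, which is needed to compare $Q^*$ against the geometric estimate on a common probability space. The False-Negative property is precisely what makes this clean: any ``Yes'' is a valid certificate of a realized edge, so no spurious ``Yes'' from another set can inflate $Q^*$ above the single-set bound, and the round-robin schedule guarantees that the queries to $E^*$ in the real algorithm are exactly the prefix of the queries the single-set process would make.
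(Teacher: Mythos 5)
Your proof is correct and follows essentially the same approach as the paper's: both focus on the subsequence of queries directed at a single set, bound that count by $2|E^*|$ in expectation via a geometric-distribution argument, and then multiply by $k$ since the round-robin scheduling spends at most $k$ queries per query to the tracked set. The paper's version tracks a single realized edge rather than full passes over $E^*$, and omits the explicit coupling discussion you include, but the core argument and resulting bound are the same.
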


The proof is deferred to Section~\ref{appendix:sec_fn} of the Appendix.

\centerline{}

\begin{algorithm}[tb]
\caption{SolveSparseFN($G$): Solves the problem in sparse moldgraphs using an 1-sided error oracle with False Negatives.}
\label{alg:sparse-fn-algo}
\textbf{Input}: Graph $G$\\
\textbf{Output}: A realized spanning tree
\begin{algorithmic}[1]
    \IF{$G$ is a single node}
        \STATE \textbf{return} $\emptyset$
    \ENDIF\\
    \STATE $u = \text{argmin} \{\text{deg}(v)$\}
    \STATE $e = \text{DISCOVER}(N(u))$\\
    \STATE // Let $G'$ be the resulting graph after contracting edge $e$.
    \STATE \textbf{return} $\text{SolveSparseFN}(G')\ \cup\ \{e\}$
\end{algorithmic}
\end{algorithm}
\centerline{}

We are now ready to prove the main result of the section, 
Lemma~\ref{lem:sparse-fn-algo}.

\begin{proof}[Proof of Lemma~\ref{lem:sparse-fn-algo}]
We use induction on the number $n$ of vertices of $G$ to prove that Algorithm~\ref{alg:sparse-fn-algo} performs at most $4 \rho m$ queries in expectation. Throughout this proof we denote by $m$ the total number of simple edges of the graph, and by $m_s$ the number of super-edges.

The base case of induction for $n = 1$ holds trivially, as the algorithm needs no queries. Now assume that it holds for all graphs of at most $n - 1$ vertices. We will prove that it also holds for graphs of $n$ vertices.

By using the sparsity property that $m_s \leq \rho n$, we can see that there always exists at least one vertex $u$ with degree $\text{deg}(u) \leq 2 \rho$. Otherwise, the number of super-edges would be equal to
$$m_s = \frac{1}{2} \sum_{v \in V} \text{deg}(v) > \rho n$$

which leads to the graph not being $\rho$-sparse, a contradiction.

The algorithm uses Algorithm~\ref{alg:parallel-discovery} as a basic subroutine, passing as input the neighborhood $N(u)$ of the least-degree vertex $u$. This is allowed because $N(u)$ is a cut of the moldgraph and thus satisfies the assumption of Lemma~\ref{lemma:parallel-discovery} that at least one edge-set contains a realized edge.

The \texttt{DISCOVER} subroutine will find a realized edge inside a
super-edge $E_f$, spending in expectation at most $2 \cdot \text{deg}(u) |E_f| \leq 4 \rho |E_f|$ queries.

After contracting the super-edge $E_f$ that contains
the realized edge found by the subroutine, the algorithm
will continue on the contracted graph $G'$ that has $n-1$ vertices. It is important to note here that the graph $G'$ will still be $\rho$-sparse, because of our assumption that graphs are also minor-closed. By the induction hypothesis, the algorithm will find a realized spanning tree of $G'$, by performing at most

$$
4 \rho\ (m - |E_f|)
$$

\noindent queries in expectation. Therefore, the total number of queries performed will be at most

$$
4 \rho |E_f| + 4 \rho\ (m - |E_f|) = 4 \rho m
$$

\noindent in expectation, and thus gives us the desired $O(\rho m)$ upper bound.
\end{proof}

\begin{corollary}\label{cor:planar-fn}
If the moldgraph is planar, then Algorithm~\ref{alg:sparse-fn-algo} performs $O(m)$ queries in expectation and succeeds in finding a realized spanning tree with probability 1. Moreover, this is tight and no better algorithm exists for this case.
\end{corollary}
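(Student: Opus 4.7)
The plan is to derive the corollary directly from Lemma~\ref{lem:sparse-fn-algo} by checking that planar graphs satisfy the two hypotheses of that lemma: they form a minor-closed family and they are $O(1)$-sparse. The first property is a classical fact (Wagner's theorem, or equivalently Kuratowski's characterization, makes planarity minor-closed by definition: contracting or deleting edges cannot destroy a planar embedding). The second follows from Euler's formula, which gives $m \leq 3n - 6$, so planar graphs are $\rho$-sparse with $\rho = 3$. Plugging $\rho = O(1)$ into the $O(\rho m)$ guarantee of Lemma~\ref{lem:sparse-fn-algo} yields the claimed $O(m)$ expected query complexity and the probability-1 success.

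For tightness, the plan is a simple counting argument that does not require the adversarial construction of Section~\ref{sec:two_sided}. In the False Negative regime, an edge can only be certified as realized by receiving at least one ``Yes'' answer, so any algorithm must issue at least one query for each of the $n-1$ edges in the output spanning tree, giving an $\Omega(n)$ lower bound on the number of queries. For any planar moldgraph Euler's formula gives $m = O(n)$, so $\Omega(n) = \Omega(m)$, matching the upper bound. A concrete witness where $m = \Theta(n)$ (e.g.\ a tree moldgraph, or a maximal planar graph) suffices to instantiate this.

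I do not anticipate any genuine obstacle here; the corollary is essentially a specialization of the already-proved lemma combined with two textbook facts about planar graphs, plus the observation that each tree edge in the output must have been queried at least once. The only mild subtlety is making sure the lower-bound argument is valid against \emph{any} (possibly randomized) algorithm, but this is immediate because the argument is information-theoretic: without a ``Yes'' response an edge cannot be added to the output with certainty, and producing the required $n-1$ such certifications costs at least $n-1$ queries regardless of the algorithm's internal randomness or adaptivity.
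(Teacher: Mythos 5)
Your proof takes essentially the same route as the paper's: the upper bound follows from Lemma~\ref{lem:sparse-fn-algo} together with planar graphs being minor-closed and $3$-sparse by Euler's formula, and the lower bound is the observation that any algorithm must issue at least one query for each of the $n-1$ tree edges it outputs (a ``Yes'' is needed to certify each one in the FN regime), combined with $m = \Theta(n)$ for planar moldgraphs. The only difference is that you spell out the information-theoretic justification for the $\Omega(n)$ lower bound a bit more explicitly than the paper does, but the argument is the same.
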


\begin{proof}[Proof of Corollary~\ref{cor:planar-fn}]
The upper bound follows directly by Lemma~\ref{lem:sparse-fn-algo} and the fact that planar graphs are both minor-closed and $3$-sparse, as $m \leq 3 n - 6$.

To argue that the algorithm is tight, we remind that in all cases the realized spanning subgraph is chosen arbitrarily, and even in an adversarial way. This means that any algorithm aiming to find a realized spanning tree has to perform at least $n-1$ queries, for the $n-1$ realized edges of the spanning tree that it discovers. As planar graphs have $m = \Theta(n)$ the desired lower bound follows.
\end{proof}

Apart from planar graphs, as we hinted when defining $\rho$-sparsity, an efficient algorithm for $\rho$-sparse graphs immediately gives us efficient algorithms for graphs that are sparse with respect to other widely-used parameters. Based on \cite{lick_white_1970, robertson1984graph,  kostochka1984lower}, we can see that the parameters of degeneracy, treewidth and Hadwiger Number all define minor-closed graph families. Moreover, $k$-degenerate graphs and $k$-treewidth graphs have sparsity $k$, while graphs with Hadwiger number $k$ are $O(k \sqrt{\log k})$-sparse. Hence, applying Lemma~\ref{lem:sparse-fn-algo} we get the following Corollary.
\centerline{}

\centerline{}

\begin{corollary}\label{cor:other-special-fn}
Algorithm~\ref{alg:sparse-fn-algo} performs $O(km)$ queries for graphs with treewidth $k$ or degeneracy $k$ and $O(k \sqrt{\log k} m)$ queries for graphs with Hadwiger number $k$.
\end{corollary}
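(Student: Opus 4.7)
The plan is to observe that this corollary follows immediately by combining Lemma~\ref{lem:sparse-fn-algo} with three standard structural facts about the parameters in question. In particular, Lemma~\ref{lem:sparse-fn-algo} guarantees an $O(\rho m)$ expected query bound whenever the moldgraph belongs to a minor-closed family that is $\rho$-sparse in the sense of our Definition. So the task reduces to verifying, for each of the three parameters (degeneracy $k$, treewidth $k$, Hadwiger number $k$), that (i) bounding the parameter by $k$ defines a minor-closed family, and (ii) the corresponding family is $\rho$-sparse for the claimed $\rho$.

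First I would handle degeneracy and treewidth together. Both parameters are well known to be monotone under taking minors (edge deletion, vertex deletion, and edge contraction can only decrease degeneracy or treewidth), so the classes ``graphs of degeneracy at most $k$'' and ``graphs of treewidth at most $k$'' are minor-closed. For the sparsity side, a $k$-degenerate graph has an ordering in which each vertex has at most $k$ earlier neighbors, which gives $m \leq k n$, and it is classical that a graph of treewidth $k$ likewise satisfies $m \leq k n$; both are referenced in the excerpt (\cite{lick_white_1970, robertson1984graph}). Therefore both families are $k$-sparse according to our definition, and Lemma~\ref{lem:sparse-fn-algo} yields the $O(km)$ bound.

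Next I would handle the Hadwiger number. A graph has Hadwiger number at most $k$ iff it excludes $K_{k+1}$ as a minor, so this is again by definition a minor-closed family. The required sparsity bound is the Kostochka--Thomason type estimate \cite{kostochka1984lower, bollobas1980hadwiger}, which states that any graph with no $K_{k+1}$-minor satisfies $m = O(k\sqrt{\log k}\,n)$; that is exactly the definition of being $O(k\sqrt{\log k})$-sparse. Plugging $\rho = O(k\sqrt{\log k})$ into Lemma~\ref{lem:sparse-fn-algo} produces the $O(k\sqrt{\log k}\,m)$ query bound.

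There is no real obstacle here — the corollary is essentially a bookkeeping statement that packages the generic guarantee of Lemma~\ref{lem:sparse-fn-algo} for three concrete width/sparsity parameters. The only point that deserves emphasis is that the algorithm does not need to know which parameter is small: the guarantee of Lemma~\ref{lem:sparse-fn-algo} is activated as soon as the (unknown) moldgraph belongs to any minor-closed $\rho$-sparse family, which lets the same algorithm simultaneously inherit all three bounds depending on which parameter is tightest for the input instance.
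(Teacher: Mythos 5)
Your proposal reproduces the paper's argument for this corollary almost verbatim: cite Lemma~\ref{lem:sparse-fn-algo}, and then observe that each of the three parameter bounds defines a minor-closed, $\rho$-sparse family. For treewidth and Hadwiger number your reasoning is sound: treewidth is monotone under all three minor operations, and Hadwiger number at most $k$ is by definition the exclusion of a $K_{k+1}$-minor, so both families are minor-closed; the edge counts $m \leq kn$ and $m = O(k\sqrt{\log k}\,n)$ then give the required sparsity, and Lemma~\ref{lem:sparse-fn-algo} applies directly.

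However, your explicit assertion that ``edge contraction can only decrease degeneracy'' is false, and this is a genuine gap (one that, to be fair, the paper also glosses over). The $1$-subdivision of $K_n$ is $2$-degenerate, yet it has $K_n$ as a minor, which is $(n-1)$-degenerate; so the class of graphs with degeneracy at most $k$ is \emph{not} minor-closed for $k \geq 2$. This matters because Lemma~\ref{lem:sparse-fn-algo} uses minor-closedness precisely to guarantee that after each contraction the intermediate multigraph still satisfies $m_s \leq \rho n$, so that a vertex of degree at most $2\rho$ can be found; for $k$-degenerate moldgraphs this invariant can break (contracting the $1$-subdivision of $K_n$ towards $K_n$ yields a graph that is not $O(1)$-sparse). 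To repair the degeneracy case you would need to either invoke a strictly stronger, minor-monotone sparsity parameter that upper-bounds degeneracy (for instance, arguing via treewidth when it happens to be comparable, or via excluded-minor density), or else give a direct analysis of Algorithm~\ref{alg:sparse-fn-algo} that does not rely on the post-contraction graph remaining $\rho$-sparse. As written, the $O(km)$ claim for degeneracy does not follow from Lemma~\ref{lem:sparse-fn-algo}.
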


\section{One-sided error oracle: False Positives}\label{sec:FP}
In the 1-sided, False-Positive error case, we initially present an $O(m \log m)$ algorithm for general graphs, while dropping the complexity to linear when the realized graph is a tree. Each of these guarantees are tight for their respective case.
 We also show how using one algorithm, we can capture both cases, without actually knowing the structure of the realized graph, and obtain the tight guarantees for each case. Our formal guarantees are formally stated in Theorem~\ref{thm:combined-fp-algo}.

\begin{theorem}\label{thm:combined-fp-algo}
In the 1-sided error FP regime, there exists an algorithm that performs $O(m \log m)$ queries in the general case and finds a realized spanning tree with high probability. Moreover, if the realized subgraph is a tree, the algorithm performs $O(m)$ queries in expectation and succeeds to find the spanning tree with high probability. These complexities are tight for their respective cases.
\end{theorem}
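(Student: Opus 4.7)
The plan is to establish Theorem~\ref{thm:combined-fp-algo} via four separate lemmas: an upper bound for the general case, an upper bound for the tree-realized case, and matching lower bounds for each, then combine the two algorithms into a unified procedure (mirroring the proof structure of Theorem~\ref{thm:combined-fn-algo}). The combination step is identical in spirit: run both specialized algorithms in parallel, alternating one query at a time, and halt when either returns a spanning tree; the total cost is at most twice the minimum of the two, so the complexity gracefully adapts to the instance.

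For the general upper bound of $O(m \log m)$, the plan is a simple threshold filter that exploits the FP promise that ``No'' answers are reliable. I would query each moldgraph edge $c \log m$ times and discard any edge that returns ``No'' at least once; the edges that always say ``Yes'' are kept, and any spanning tree of the survivors is returned. A non-realized edge survives only with probability $p^{c\log m} \le m^{-c'}$, so by a union bound over all $m$ edges, with probability $\ge 1 - 1/m$ the surviving set equals the realized subgraph and any spanning tree of it is realized. The cost is $m\cdot c\log m = O(m\log m)$.

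For the tree-realized upper bound of $O(m)$, I would design a cycle-based algorithm dual to Algorithm~\ref{alg:sparse-fn-algo}: where the FN algorithm discovers realized edges across cuts, this one discovers non-realized edges inside cycles. Concretely, define a subroutine \textsc{Discover-FP}$(C)$ that round-robins through the edges of a cycle $C$, querying each one in cyclic order until the first ``No'' is observed and then removing that (certainly non-realized) edge. The main loop repeatedly picks a cycle of the current graph $G$ and calls \textsc{Discover-FP}. Correctness is immediate: since the realized subgraph is an acyclic tree, every cycle of $G$ contains at least one non-realized edge so each call terminates; we only ever remove edges confirmed to be non-realized; and when $G$ becomes acyclic it still contains all $n-1$ realized edges, so it must equal $T^\ast$ exactly. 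For the query bound, the key charging argument is that each non-realized edge is queried only while it is still in $G$, each such query independently gives ``No'' with probability $1-p$, and the edge is removed on its first ``No''; thus the total number of queries landing on any single non-realized edge is dominated by a $\mathrm{Geom}(1-p)$ variable, summing to $O(m-n+1)$ in expectation across non-realized edges. The more delicate part is bounding the ``wasted'' queries that fall on realized edges during the round-robin scans; I would handle this by choosing cycles via fundamental cycles of a suitable spanning tree (or by first executing a single pass over all $m$ edges to reduce the graph to one with only $O(pm)$ excess edges and then running \textsc{Discover-FP} on the leftover short cycles), so that the sum $\sum_i |C_i|/j_i$ of per-scan expected costs telescopes to $O(m)$. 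This amortized analysis is the main obstacle.

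For the lower bounds, $\Omega(m)$ in the tree-realized case is essentially trivial: an adversary can place any edge of the moldgraph in some alternative realized tree, so any algorithm performing $o(m)$ queries must err on some instance, as it has no information about the edges it never queried. For $\Omega(m \log m)$ in the general case, I would adapt the construction of Lemma~\ref{lem:2-sided-error-lb}: take a moldgraph made of $\Theta(n)$ independent pairs of parallel edges in which the adversary realizes exactly one edge of each pair. In each pair, the realized edge always returns ``Yes'' while the non-realized edge returns ``Yes'' with probability $p$, so the only way to identify the realized edge is to query the non-realized one until it emits a ``No''. Requiring each pair to be correctly resolved with probability $1 - O(1/n)$ (needed so a union bound yields constant overall success) forces $\Omega(\log n)$ queries per pair in expectation, totalling $\Omega(n\log n) = \Omega(m\log m)$ since $m = \Theta(n)$ on this instance.
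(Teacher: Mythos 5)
Your overall architecture (run the general-case algorithm and the tree-specialized algorithm in parallel, alternating one query at a time) and your two "easy" parts—the $O(m\log m)$ filter that discards any edge once it answers ``No'', and the $\Omega(m\log m)$ lower bound via parallel edge-pairs where a non-realized edge must be seen to fail—match the paper's proof closely. Your high-level intuition for the tree case (work with cycles instead of cuts, dually to Algorithm~\ref{alg:sparse-fn-algo}) is also exactly the paper's reduction.

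The genuine gap is in the amortized analysis for the tree-realized $O(m)$ bound, and you flag it yourself as ``the main obstacle'' without resolving it. Your charging argument for queries landing on non-realized edges is fine (each such edge takes a geometric number of queries total before its first ``No''), but you have no control over the queries that fall on realized edges during a round-robin cycle scan, and neither of your proposed fixes closes the hole. Fundamental cycles of an arbitrary spanning tree can have length $\Theta(n)$ even in a planar moldgraph (a long cycle plus one chord), so every scan of that cycle costs $\Theta(n)$ wasted queries; and the ``single pass to leave $O(pm)$ excess edges'' heuristic does not guarantee short residual cycles either. The missing structural idea is the paper's: \emph{assume the moldgraph is planar}, pass to its \emph{dual graph} (faces become vertices, cycles become vertex neighborhoods/cuts), and observe that the dual of a planar graph is itself planar, hence $3$-sparse and minor-closed. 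Then the tree-realized FP problem becomes an FN instance on a sparse minor-closed graph, and Algorithm~\ref{alg:sparse-fn-algo}—which always picks a vertex of degree $O(\rho)$, i.e.\ a face of $O(1)$ edges—gives the telescoping $O(\rho m)=O(m)$ bound directly via Lemma~\ref{lem:sparse-fn-algo} and Lemma~\ref{lem:dual-alg-fp-tree}. Note also that the paper's $O(m)$ tree case explicitly requires the moldgraph to be planar (Theorem~\ref{thm:algo-pn-dual}); your proposal is phrased as if the bound holds for arbitrary moldgraphs with tree-realized subgraphs, which the paper does not claim and which your argument does not establish.
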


\begin{proof}[Proof of Theorem~\ref{thm:combined-fp-algo}]
Similarly to the case of False Negative oracle errors (Lemma~\ref{thm:combined-fn-algo}) we can combine the Naive Algortihm with Algorithm~\ref{alg:solve-planar-fp-dual}, letting them perform queries alternately. When any of the two algorithms halts, the process ends.

Consider the case of general graphs. By Lemma~\ref{lem:naive_FP} the naive algorithm succeeds in finding a realized spanning tree with high probability by performing $O(m \log m)$ queries. Thus, the combined algorithm also succeeds with at most $O(m \log m)$ queries.

Now, we look at the special case where the moldgraph is planar and the realized subgraph is a tree. By Lemma~\ref{lem:dual-alg-fp-tree}, Algorithm~\ref{alg:solve-planar-fp-dual} performs at most $O(m)$ queries in expectation. Let $X$ be the random variable denoting the number of queries performed by Algorithm~\ref{alg:solve-planar-fp-dual}, and let $E = \{X > m \log_2(m^2)\}$ be the event that $X$ is more than the queries of the naive algorithm. Using Markov's inequality we get
$$
\Pr{}{E} \leq \frac{\E{X}}{m \log_2(m^2)} \leq O\left( \frac{1}{\log m} \right)
$$
Thus, the number $Q$ of queries performed before the combined algorithm terminates for this case is
\begin{align*}
    \E{Q} &= \E{Q | \neg E} \Pr{}{\neg E} + \E{Q | E} \Pr{}{E}\\
    &\leq \E{X} + (m \log_2(m^2)) O\left( \frac{1}{\log m} \right)\\
    &\leq O(m)
\end{align*}

When the algorithm has terminated, there still exists some probability of failure, in the case that both event $E$ happens and the naive algorithm fails to find a realized spanning tree. By using the failure probability from Lemma~\ref{lem:naive_FP} and the previously calculated probability of event $E$ happening, we get
$$
\Pr{}{E \wedge \text{naive algo fails}} = O\left( \frac{1}{m \log m} \right)
$$

Thus, the combined algorithm succeeds in this case with high probability as well, and the proof is concluded.
\end{proof}

\subsection{Upper \& Lower Bounds}
We start by showing, similar to the previous section, how to obtain the upper bound in the case of general graphs.
\begin{restatable}{lemma}{naiveFP}\label{lem:naive_FP}
    In the 1-sided error FP regime, there exists an
algorithm that performs $O(m \log m)$ queries in the general
case and finds a realized spanning tree with high probability.
\end{restatable}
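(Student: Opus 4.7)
The plan is to mimic the naive verification-style approach, adapted to exploit the one-sidedness of the oracle. For every edge $e$ of the moldgraph I would query $e$ a fixed number $k$ of times and classify $e$ as \emph{realized} if and only if all $k$ responses are ``Yes''. Since the False-Positive oracle answers ``Yes'' deterministically on realized edges, every realized edge is classified as realized with probability $1$. Consequently, the set $S$ of edges classified as realized always contains the entire realized subgraph $G'$, which is connected by assumption. So if no non-realized edge ends up in $S$, then $S = E(G')$, and the algorithm can return any spanning tree of $S$, computed by a standard graph-search routine on the classified subgraph.

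The work is therefore in choosing $k$ so that, with high probability, no non-realized edge is misclassified. For a fixed non-realized edge $e$, each response is ``Yes'' independently with probability $p < 1/2$, so the probability that all $k$ queries come back ``Yes'' is $p^k$. Setting $k = c \log_2 m$ for a sufficiently large constant $c$ (chosen so that $c \log_2(1/p) \ge 2$) makes this at most $1/m^2$. A union bound over the at most $m$ non-realized edges shows that with probability at least $1 - 1/m$, every non-realized edge receives at least one ``No'' response and is correctly rejected.

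Conditioning on this high-probability event, the classified-as-realized subgraph $S$ is exactly $G'$; since $G'$ is connected, any spanning tree of $S$ returned by the algorithm is a realized spanning tree of the moldgraph. The total query cost is $m \cdot k = O(m \log m)$, matching the stated bound, and the success probability is $1 - O(1/m)$, i.e.\ high probability in $m$. There is no serious obstacle here: the argument is essentially the FP-dual of the straightforward concentration argument used in the other naive analyses of the paper, with the simplification that one side of the error is eliminated deterministically and no majority vote is needed.
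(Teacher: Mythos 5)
Your proof is correct and takes essentially the same approach as the paper: query each edge $O(\log m)$ times, declare it realized only if every response is ``Yes'', bound the misclassification probability of a non-realized edge by $p^k \le 1/m^2$, and union-bound over all edges. The only cosmetic difference is that you parametrize the number of queries as $c\log_2 m$ with $c\log_2(1/p)\ge 2$, whereas the paper fixes $k=\log_2(m^2)$ and simply uses $p<1/2$ to get $p^k < 1/m^2$; the substance is identical.
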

The proof is deferred to Section~\ref{appendix:sec_fp} of the Appendix.

\noindent Furthermore, we show the lower bound for the general case graphs, stated formally below. 
Observe that from the construction of this lower bound, no algorithm can do better in planar or sparse graphs.

\begin{restatable}{lemma}{oneSidedFPLB}\label{thm:1-sided-fp-lb}
In the regime of 1-sided FP errors, if the realized graph has cycles, any algorithm requires $\Omega(m \log m)$ queries to discover a spanning tree with constant failure probability.
\end{restatable}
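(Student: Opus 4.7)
The plan is to lift the $2$-sided lower bound construction of Lemma~\ref{lem:2-sided-error-lb} to the FP setting by replacing each pair of parallel edges with a triple, so that the adversary can keep the realized subgraph cyclic while still hiding one edge per gadget. Take the moldgraph on vertices $v_0,v_1,\ldots,v_n$ with three parallel edges between $v_i$ and $v_{i+1}$, so $m=3n$. In each triple, the adversary picks one of the three edges uniformly at random and declares it non-realized; the other two are realized. The resulting realization is always connected, and every triple contributes a $2$-cycle between its endpoints, so the hypothesis ``realized graph has cycles'' is met.

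Next I would reduce to a per-triple analysis. Every triple is a $3$-edge cut of the moldgraph, so any spanning tree of the realization must contain exactly one realized edge per triple. Because the adversary's choices and the oracle's coin flips are independent across triples, queries in one triple are statistically uninformative about the realized edges of any other. Hence, by Yao's principle, it suffices to lower-bound the number of queries needed inside a single triple for the algorithm to output a realized edge with probability $1-\e$, and then combine the bounds over the $n$ independent gadgets.

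The crux is the per-triple lower bound: for any algorithm performing at most $k$ queries inside a fixed triple $\{e_1,e_2,e_3\}$, the probability of outputting the non-realized edge is at least $p^k/3$. The key observation is that a single ``No'' response is conclusive in the FP model, so the only subtree of the algorithm's decision tree on which the identity of the non-realized edge is not revealed is the all-Yes branch. On that branch the algorithm's queries are a deterministic sequence; let $q_j$ denote the number of queries it makes to $e_j$ along it, with $\sum_j q_j\le k$. Under the hypothesis $H_j$ that $e_j$ is non-realized, queries to $e_j$ return Yes with probability $p$ each while queries to the other two edges deterministically return Yes, so $\text{Pr}[\text{all-Yes}\mid H_j]=p^{q_j}$. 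The algorithm must commit to a single output $\hat e=e_{\hat\jmath}$ on this branch; the joint occurrence of the all-Yes event with $H_{\hat\jmath}$ is a failure event, contributing at least $\tfrac13\,p^{q_{\hat\jmath}}\ge \tfrac13\,p^k$ to the total failure probability, since $q_{\hat\jmath}\le k$.

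Finally I would combine the $n$ triples. Letting $k_i$ be the queries spent in triple $i$ and $K=\sum_i k_i$, independence across triples gives
\[
\text{Pr}[\text{success}]\;\le\;\prod_{i=1}^n\Bigl(1-\tfrac{p^{k_i}}{3}\Bigr)\;\le\;\exp\!\Bigl(-\tfrac{1}{3}\sum_{i=1}^n p^{k_i}\Bigr)\;\le\;\exp\!\Bigl(-\tfrac{n}{3}\,p^{K/n}\Bigr),
\]
where the last inequality uses Jensen's inequality and the convexity of $x\mapsto p^x$ (since $p<1$). For a sufficiently small constant $c=c(p)>0$, $K\le c\,n\log n$ makes $n\,p^{K/n}\ge n^{1-c\log(1/p)}\to\infty$, so the success probability falls below any positive constant. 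Thus constant failure probability forces $K=\Omega(n\log n)=\Omega(m\log m)$. The main technical subtlety I expect is extending the per-triple argument to fully adaptive algorithms whose per-triple query budgets depend on responses observed elsewhere; focusing on the all-Yes subpath of the decision tree is what allows this combination to go through, because on that subpath the worst-case bound holds pointwise before the triples are coupled.
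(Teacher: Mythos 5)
Your construction differs from the paper's: the paper uses a chain of $n$ edge-pairs where (arbitrarily chosen) half of the pairs have both edges realized, so that a pair whose single non-realized edge happens to answer ``Yes'' on all queries becomes indistinguishable from an all-realized pair; you instead use triples with exactly one non-realized edge drawn uniformly at random, which makes every gadget a priori identical and casts the per-gadget problem as a clean three-way hypothesis test. Both moldgraphs are valid, both force a cyclic realized subgraph, and both give $m=\Theta(n)$. Your per-gadget bound of $p^{k}/3$ via the all-Yes branch of the local decision tree is correct and is in fact tidier than the paper's ``bad pair'' counting, which conditions on the non-realized edge being queried $k$ times without justifying why the algorithm would do so.

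The combination step, however, has a gap that the all-Yes-subpath remark does not close. Writing $\Pr{}{\text{success}} \leq \prod_i\left(1-\tfrac{1}{3}p^{k_i}\right)$ treats the per-triple failure events as independent with deterministic budgets $k_i$, but for an adaptive algorithm the budget spent in triple $i$ is a random variable depending on responses seen elsewhere, and the failure indicator $\ind{j_i=\hat\jmath_i}$ is not independent across $i$ because $\hat\jmath_i$ is a function of the whole transcript. Restricting attention to the global all-Yes transcript does fix the $k_i$, but that transcript has probability decaying in $n$, so $\Pr{}{\text{fail}} \geq \Pr{}{\text{fail}\cap\text{all-Yes}}$ is far too weak to conclude a constant failure bound. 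A correct combination needs to condition, for each triple $i$, on the full set of responses outside triple $i$ and on receiving no ``No'' inside triple $i$, observe that the hypotheses $j_1,\dots,j_n$ remain conditionally independent given the transcript, and derive the product bound from the posterior $\Pr{}{j_i=\hat\jmath_i\mid\text{transcript}} = p^{q_{i,\hat\jmath_i}}/\sum_{j'}p^{q_{i,j'}}$; it is doable, but it is a genuine argument that is missing here. To be fair, the paper's own proof also handles this step informally, deferring to the $\alpha$-fraction argument of Lemma~\ref{lem:2-sided-error-lb}, so your sketch is at roughly the same level of rigor as the original — just be aware that the Jensen/product step is not automatically valid as written for adaptive query strategies.
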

The proof is deferred to Section~\ref{appendix:sec_fp} of the Appendix.
\subsection{Special Case: Acyclic Realized Graphs}
In this section we focus on the special case where the realized subgraph
is a tree.
This property implies that for each cycle in the moldgraph, at least one edge of the cycle must be non-realized. Since our oracle only commits False Positive errors, the negative answers are definitive, meaning that we can completely remove
the edges which have yielded such answers. We design an algorithm
(Algorithm~\ref{alg:solve-planar-fp-dual}) that utilizes this fact
and progressively removes non-realized edges until it is left with a tree
which will also be the realized subgraph. This result is stated formally in the theorem below.

\begin{theorem}\label{thm:algo-pn-dual}
In the regime of 1-sided FP errors, if the moldgraph $G$ is a planar graph
and the realized subgraph is a tree, Algorithm~\ref{alg:solve-planar-fp-dual}
recovers the realized spanning tree with probability $1$ while performing
at most $O(m)$ queries in expectation.
\end{theorem}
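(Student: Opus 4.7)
The plan is to exploit planar duality to reduce the False-Positive problem on $G$ to a False-Negative problem on the planar dual $G^*$, and then invoke Corollary~\ref{cor:planar-fn}. The name of Algorithm~\ref{alg:solve-planar-fp-dual} signals exactly this reduction, so I would structure the proof around it.

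First I would fix a planar embedding of $G$ and let $G^*$ be its planar dual, together with the standard bijection $e \leftrightarrow e^*$ between primal and dual edges. The key structural fact is that a set $T \subseteq E(G)$ is a spanning tree of $G$ if and only if the dual edges corresponding to its complement $E(G) \setminus T$ form a spanning tree of $G^*$. Since the realized subgraph is assumed to be a spanning tree of $G$, the non-realized primal edges correspond under the bijection to a spanning tree of $G^*$, which I will call the \emph{dual realized tree}. Finding the realized spanning tree of $G$ is therefore equivalent to finding the dual realized tree of $G^*$.

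Next I would reinterpret the oracle. In the FP regime a ``No'' answer on primal edge $e$ is definitive (certifying that $e$ is non-realized, hence that $e^*$ lies in the dual realized tree), while ``Yes'' answers are noisy. Under the edge bijection, swapping the labels \emph{Yes} and \emph{No} turns the original oracle into exactly an FN oracle on $G^*$ with the dual realized tree as its target: a (swapped) ``Yes'' certifies $e^* \in T^*$, while a (swapped) ``No'' is wrong with probability at most $p < 1/2$. I would then run Algorithm~\ref{alg:sparse-fn-algo} on $G^*$ with this relabeled oracle, treating the output as the dual realized tree and returning its complement in $G$. Since the planar dual of a planar graph is planar, $G^*$ is $3$-sparse and belongs to a minor-closed family, so Corollary~\ref{cor:planar-fn} immediately yields $O(|E(G^*)|) = O(m)$ queries in expectation and recovery with probability $1$.

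The main point to check, which I expect to be the only real obstacle, is that the contraction operations performed by SolveSparseFN on $G^*$ stay compatible with the original primal oracle at every recursive step. Each dual contraction corresponds to deleting a confirmed non-realized primal edge, and the super-edges produced in the contracted multigraph are simply collections of distinct primal edges that can still be queried individually through the underlying FP oracle; the relabeling of Yes/No is preserved under this correspondence. Because planarity and $3$-sparsity are preserved under edge contraction, the query bound from Lemma~\ref{lem:sparse-fn-algo} applies uniformly throughout the recursion, so the $O(m)$ expected bound transfers directly to the primal setting and the probability $1$ correctness guarantee follows.
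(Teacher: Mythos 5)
Your proposal is correct and matches the paper's approach: you reduce the FP problem on $G$ to an FN problem on the planar dual $G^*$ by inverting the oracle's answers, invoke the duality between spanning trees and cotrees (which is precisely Lemma~\ref{lem:dual-alg-fp-tree}), and then apply the sparse-FN guarantee of Lemma~\ref{lem:sparse-fn-algo} (via Corollary~\ref{cor:planar-fn}) to the planar dual. The additional care you take to verify that the dual contractions remain compatible with the primal FP oracle throughout the recursion is implicit in the paper's algorithm description and does not represent a departure.
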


Our main observation is that the FP setting is very similar, and in fact 
can be reduced to, the FN setting. In the FN setting, we know that 
at least one realized edge lies in every cut of the moldgraph and we 
repeatedly query the edges of a cut in order to unveil the realized edge. 
In a similar manner, in the FP setting,
we can repeatedly query the edges of the cycle in order to discover the non-realized edge in every cycle.
Hence, the question that arises is: how can one decide a sequence of cycles to query such that the expected number of queries is minimized?
Algorithm~\ref{alg:solve-planar-fp-dual} does so by utilizing 
the planarity of the moldgraph to consistently find small cycles to query.

\begin{algorithm}[H]
\caption{SolvePlanarFP($G$): Discovers a realized spanning tree in planar moldgraphs
using an 1-sided error oracle with False Positives.}
\label{alg:solve-planar-fp-dual}
\textbf{Input}: Planar Graph $G(V,E)$\\
\textbf{Output}: A realized spanning tree
\begin{algorithmic}[1] 
    \STATE // Construct the dual graph of $G$ 
    \STATE $G' \gets dual(G)$ \\
    \STATE // Inverse all the oracle answers and use SolveSparseFN 
    \STATE \textbf{return} $E(G) \setminus \text{SolveSparseFN}(G')$
\end{algorithmic}
\end{algorithm}

\paragraph{Description of Algorithm~\ref{alg:solve-planar-fp-dual}} 
Our algorithm reduces the FP setting of this special case to solving an 
FN instance on the \emph{dual graph} of the moldgraph. 
The dual of a planar graph has one node for each face of the planar graph, 
as illustrated in Figure~\ref{fig:planar_and_dual_graphs}.
Two face nodes are connected by an edge in the dual graph  if and only if they share an edge in the planar graph. 
With this transformation we can now focus on cuts of the dual graph which directly correspond to cycles of the moldgraph.
Notice that removing an edge in the moldgraph (after identifying it as 
non-realized) is the same as contracting it in the dual graph. As a result,
removing all the non-realized edges until we are left with a tree in the 
moldgraph is exactly equivalent to contracting the same edges in 
the dual until we are left with one node (Lemma~\ref{lem:dual-alg-fp-tree}).
For this reason, Algorithm~\ref{alg:solve-planar-fp-dual}
constructs the dual graph of the moldgraph and calls
Algorithm~\ref{alg:sparse-fn-algo} to retrieve a non-realized spanning tree 
of the dual graph.
Note that we reverse all the oracle's answers before we hand them out to 
Algorithm~\ref{alg:sparse-fn-algo}, since by design it contracts the edges for
which we get positive answers but in the FP setting we want to contract 
the edges of the dual graph whose corresponding edges yielded negative answers.

\begin{figure}[H]
\centering
\begin{subfigure}{\linewidth}
    \centering
    \includegraphics[width=0.3\linewidth]{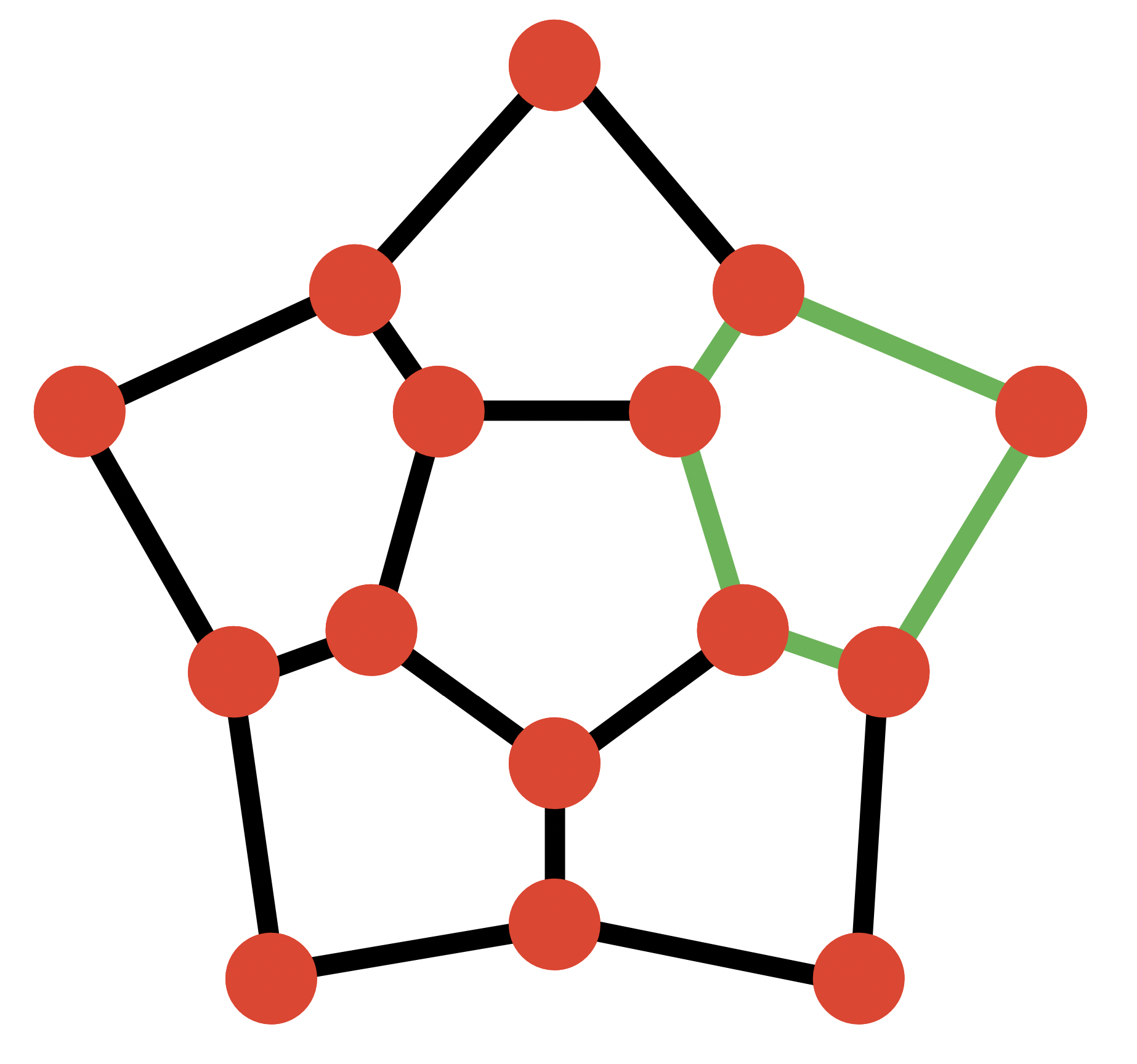}
    \hspace{0.1\linewidth}
    \includegraphics[width=0.3\linewidth]{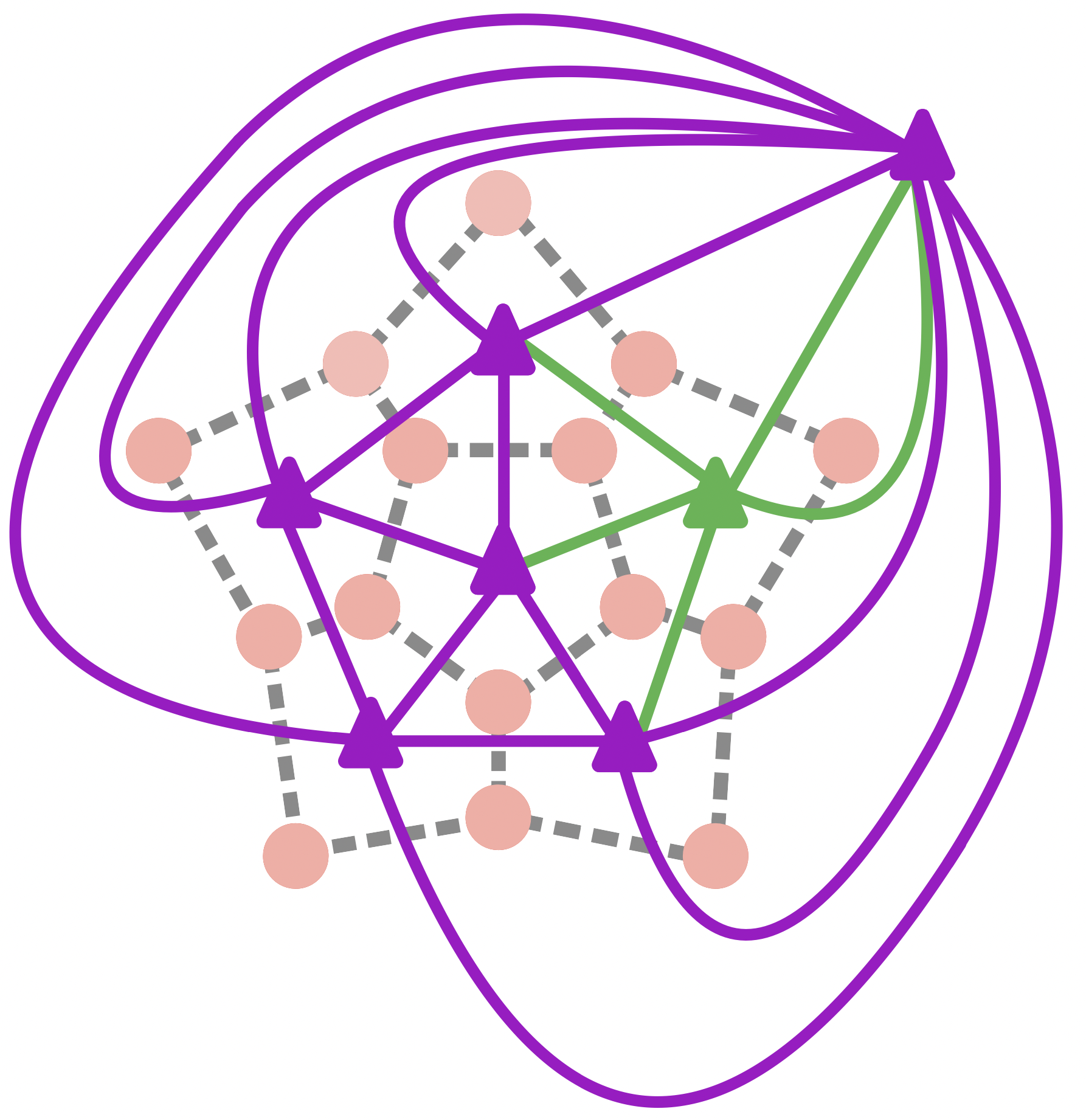}
    \caption{A cycle of the graph is the neighborhood of the enclosed face's node in the dual graph}
    \label{subfig:before_contraction}
\end{subfigure}%
\newline
\begin{subfigure}{\linewidth}
    \centering
    \includegraphics[width=0.3\linewidth]{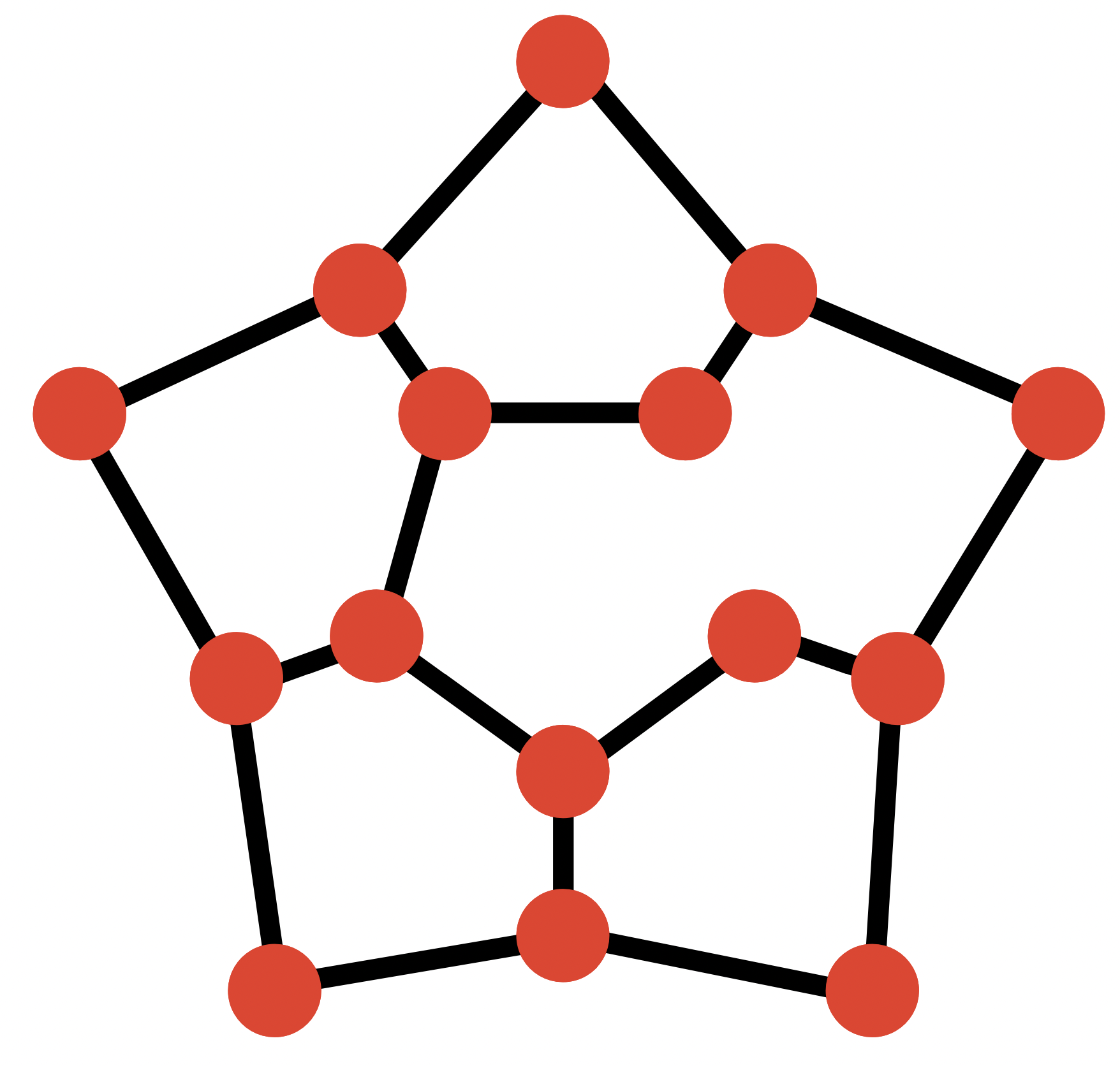}
    \hspace{0.1\linewidth}
    \includegraphics[width=0.3\linewidth]{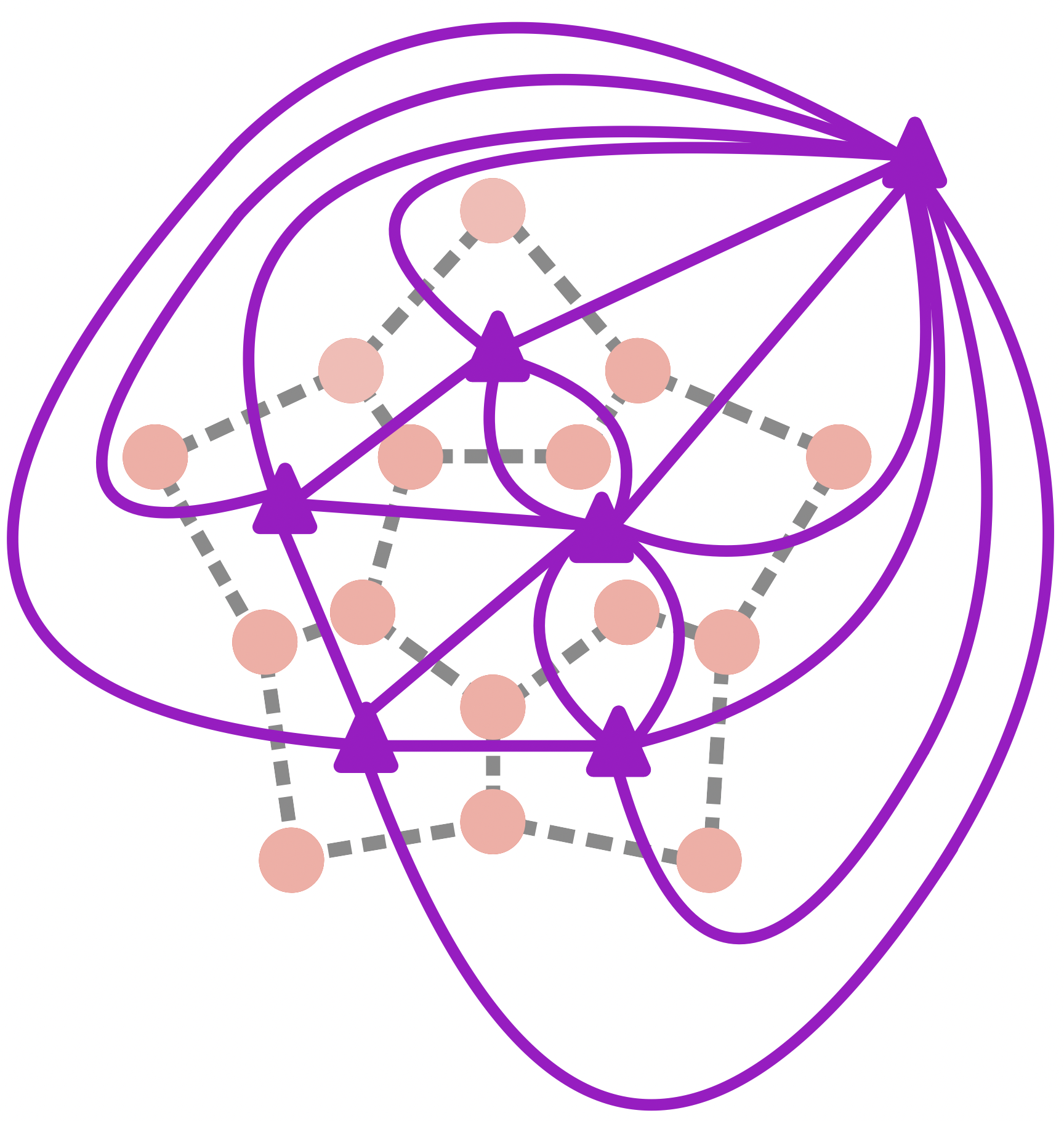}
    \caption{Breaking the cycle by removing an edge leads to the contraction of
    the corresponding edge in the dual graph}
    \label{subfig:after_contraction}
\end{subfigure}%

\caption{One iteration of Algorithm~\ref{alg:solve-planar-fp-dual}}
\label{fig:planar_and_dual_graphs}
\end{figure}

We will now prove the correctness of Algorithm~\ref{alg:solve-planar-fp-dual}
and bound its query complexity.

\begin{restatable}{lemma}{lemmaDualFP}\label{lem:dual-alg-fp-tree}
Let $G$ be a planar moldgraph, $T$ be the realized tree of $G$ and 
$G'$ be a dual graph of $G$.
The induced subgraph of $G'$ containing all the edges
corresponding to non-realized edges of $G$, is
a spanning tree of $G'$.
\end{restatable}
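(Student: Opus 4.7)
The plan is to invoke the classical planar cycle--cut duality: a set of edges $F \subseteq E(G)$ forms a simple cycle in the primal $G$ if and only if the corresponding set $F^\ast$ of dual edges forms a minimal edge cut of $G'$, and vice versa. Granting this, I need to check two things about the set $F$ of dual edges corresponding to the non-realized primal edges $E(G) \setminus E(T)$: that $F$ is acyclic in $G'$ and that it has the right cardinality to be a spanning tree of $G'$.

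For the cardinality, since $T$ is a spanning tree of the connected moldgraph $G$, there are exactly $m - (n-1)$ non-realized edges. Euler's formula applied to the connected planar graph $G$ gives $|V(G')| = f = m - n + 2$, so a spanning tree of $G'$ requires exactly $f - 1 = m - n + 1$ edges, matching $|F|$.

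For acyclicity, suppose toward contradiction that $F$ contains a cycle $C^\ast$ in $G'$. By cycle--cut duality, the corresponding primal edges form an edge cut of $G$, i.e.\ they separate some $\emptyset \neq S \subsetneq V(G)$ from $V(G) \setminus S$. But by construction every edge of this cut is non-realized, so $T$ has no edge between $S$ and $V(G) \setminus S$, contradicting the fact that $T$ spans $G$. Combining acyclicity with the correct edge count shows that $F$ is a spanning tree of $G'$, which is exactly the claim.

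The main obstacle, and really the only nontrivial point, is justifying planar cycle--cut duality in the multigraph setting that Algorithm~\ref{alg:solve-planar-fp-dual} implicitly uses (and under which the dual is taken). I would briefly remark that the duality bijection carries over without change: parallel edges in $G$ become 2-cuts in $G'$ and self-loops in $G$ become bridges in $G'$, and the cycle--cut correspondence still holds face by face. Once this is granted, the argument above is essentially a one-line Euler-formula count plus a one-line contradiction via duality, and no further case analysis is required.
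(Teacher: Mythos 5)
Your proof is correct, but it takes a genuinely different route from the paper's. The paper proceeds by induction on the number of non-realized edges: the base case is a graph with a single cycle (dual has two vertices), and the inductive step removes one non-realized edge, applies the hypothesis to the resulting graph and its dual, and then reinstates the edge while arguing by cases (edge splits an old face vs.\ encloses a new one) that the dual spanning tree extends. You instead invoke the classical cycle--cut duality of planar graphs together with an Euler-formula count: the non-realized edges number $m-(n-1)$, the dual has $f=m-n+2$ vertices, so the count matches a spanning tree of $G'$; and acyclicity of $F$ in $G'$ follows because a cycle in $F$ would dualize to a bond of $G$ consisting entirely of non-realized edges, which the spanning tree $T$ would be unable to cross. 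Your argument is shorter and avoids the case analysis, at the cost of relying on cycle--cut duality (which, as you correctly flag, must be stated in the multigraph setting since the dual is generally a multigraph). The paper's induction is more self-contained but somewhat more informal in its two-case analysis. Both establish the lemma; yours is essentially the standard ``interdigitating trees'' argument, and it is a valid alternative.
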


The proof of Lemma~\ref{lem:dual-alg-fp-tree} is deferred to 
Appendix~\ref{appendix:sec_fp}.

\begin{proof}[Proof of Theorem~\ref{thm:algo-pn-dual}]
Algorithm~\ref{alg:solve-planar-fp-dual} constructs the dual graph of $G$
and uses Algorithm~\ref{alg:sparse-fn-algo} to find a spanning tree of 
$G'$ consisting of non-realized edges. From Lemma~\ref{lem:dual-alg-fp-tree}
we know that by removing these edges, we are left with the realized tree $T$
of $G$. All the queries performed during this process are made through
Algorithm~\ref{alg:sparse-fn-algo}. Therefore, from 
Lemma~\ref{lem:sparse-fn-algo}, the total number of queries 
conducted is $O(m)$.
\end{proof}

\bibliographystyle{alpha}
\bibliography{bib}
\newpage
\appendix
\section{Proofs from Section~\ref{sec:verify}}\label{appendix:sec_verify}
\verifyCon*
\begin{proof}[Proof of Theorem~\ref{thm:verify-connectivity-linear}]
Algorithm~\ref{alg:verify-connectivity} decides a fixed permutation
of the given tree's edges and proceeds by doing the following procedure
sequentially on each edge: It queries the given edge until it receives
$c = \lceil \log_{\frac{1-p}{p}} \left( \frac{1}{\delta} \right) \rceil$ 
more positive answers than negative ones, or
until the global predetermined budget of queries is consumed.
If the algorithm runs out of queries then it reports
the tree as \textit{disconnected}, otherwise it reports it 
as \textit{connected}.

Focus on the queries on a given edge $e$ and let $Y_i$ be a random variable
that takes the value $1$ if the $i$-th query is positive and $-1$ otherwise.
Let $X_i$ be the sum of $Y_j$ for the first $i$ queries on edge $e$.
The random variable $X_i$ counts the difference of the amount of positive
and negative answers from queries performed on $e$.
The sequence $\{X_t\}_{t\geq 0}$ is a Markov process and specifically a
biased random walk.
The transition probabilities are the following

\[\makebox[10em][l]{\text{If $e$ is realized:}}
\begin{cases}
\Pr{}{X_{t+1} = X_t + 1} = 1-p \\
\Pr{}{X_{t+1} = X_t - 1} = p \\
\end{cases}
\]

\[\makebox[10em][l]{\text{If $e$ is non-realized:}}
\begin{cases}
\Pr{}{X_{t+1} = X_t + 1} = p \\
\Pr{}{X_{t+1} = X_t - 1} = 1-p \\
\end{cases}
\]

Since $p < \frac{1}{2}$, if the edge $e$ is realized, the random walk is
biased towards positive values and if $e$ does not exist, the random
walk is biased towards negative values. 
Therefore, if we had infinite amount of queries, the random walk of
a realized edge would surpass any positive threshold with probability $1$,
whereas the random walk of a non-existing edge would have 
exponentially small probability of even reaching the threshold. 
However, now that we can perform a finite number
of queries, there is a constant probability that at least one
of the positive edges will fail to reach the threshold.
We prove that the selected threshold and total budget yield an
$(\epsilon, \delta)$ probability guarantee.

First, we calculate the probability to miss-classify a disconnected
tree. Let $A \subseteq E(T),\ A \neq \emptyset,$ be the 
subset of the tree's edges that do not exist. 
In order for Algorithm~\ref{alg:verify-connectivity}
to miss-classify $T$ as connected, the random walk of all the edges in $A$ 
must surpass the threshold.
Hence, the probability of failure is upper bounded by the probability
that a single edge of $A$ surpasses the threshold

\begin{align*}
\Pr{}{\text{ALG\ fails\ on\ } T} 
& = \prod_{e \in A} \Pr{}{e\ \text{passes\ the\ threshold}} \\
& \leq \Pr{}{e \in A\ \text{passes\ the\ threshold}} \\
\end{align*}

We will calculate the probability that a \textit{non-realized} edge $e$ 
reaches the threshold. Let $T_c$ be the threshold hitting time of the edge's 
random walk, that is 
$$
T_c = \inf \left\{ t: X_t = 
\left\lceil \log_{\frac{1-p}{p}} \left( \frac{1}{\delta} \right) \right \rceil
\right\}
$$
The probability that the edge reaches the threshold within the query budget 
is of course bounded by the probability to reach the same threshold 
in infinite amount of steps. 
Therefore the probability of failure is bounded by

$$
\Pr{}{T_c < +\infty}
$$

We define the potential function $\Phi: \mathbb{Z} \to \mathbb{N}$, as
the probability to hit the threshold given that we start 
at a specific location

$$
\Phi(x) = \Pr{}{T_c < +\infty | X_0 = x}
$$

Using a first step analysis and the Markov property we arrive at the following
equations system for $\Phi$

\begin{equation*}
\begin{cases}
\Phi(x) = p \ \Phi(x+1) + (1-p)\ \Phi(x),\ \forall x \in [0,c)\\
\Phi(c) = 1
\end{cases}
\end{equation*}

where $c = \lceil \log_{\frac{1-p}{p}}
\left( \frac{1}{\delta} \right) \rceil$.

From the above system we get that

$$
\Phi(x) = \left( \frac{1-p}{p} \right)^{x-c},\ x\in [0,c]
$$

Given the potential function, the probability of failure is bounded
as follows
$$
\Pr{}{\text{ALG fails on}\ T} \leq \Phi(0) = 
\left( \frac{1-p}{p} \right)^{-\lceil \log_{\frac{1-p}{p}} \left( \frac{1}{\delta} \right) 
\rceil} \leq \delta
$$

Now given a tree $T'$ that is connected, we will calculate the probability
that Algorithm~\ref{alg:verify-connectivity} miss-classifies it as disconnected.
We, again, focus on the random walk of a single edge $e$ and define $T_c$
as previously. Now that the $e$ is realized, we will calculate the expected
number of queries to reach the threshold, by using the below potential 
function.

$$
g(x) = \E{T_c | X_0 = x}
$$

\noindent In a similar manner, we construct the following equations system for $g$

\begin{equation*}
\begin{cases}
g(x) = 1 + (1-p) \ g(x+1) + p\ g(x),\ \forall x \in [0,c)\\
g(c) = 0
\end{cases}
\end{equation*}

\noindent The above system yields the following solution for $g$

$$
g(x) = \tilde{A} \left [  \left( \frac{p}{1-p} \right)^x
- \left( \frac{p}{1-p} \right)^c
\ \right]
+ \frac{x-c}{2p-1}
$$

\noindent for some real constant $\tilde{A}$ and $c = \lceil \log_{\frac{1-p}{p}} \left( \frac{1}{\delta} \right) \rceil$. This gives us the following bound
on the expected value of $T_c$

$$
\E{T_c|X_0=0} = g(0) \leq \frac{c}{1-2p}
$$

Let $\ T_c^1,\ \dots,\ T_c^n\ $ be the hitting times of the random walks
of each edge of $T'$. The only way in which 
Algorithm~\ref{alg:verify-connectivity} can miss-classify $T'$ as disconnected
is if the queries required for all edges to reach the threshold are more than
the available budget. Therefore, using Markov's inequality, the probability
of failure is bounded as follows

\begin{align*}
\Pr{}{\text{ALG\ fails\ on\ } T'} 
& = \Pr{}{S_n \geq \frac{1}{\epsilon} \cdot \frac{c}{1-2p}\cdot n} \\
& \leq \Pr{}{S_n \geq \frac{1}{\epsilon} \cdot \E{S_n}} \\
& \leq \epsilon \\
\end{align*}

\noindent where $S_n = \sum_{i=1}^n T_c^i$
\end{proof}

\section{Proofs from Section~\ref{sec:two_sided}}\label{appendix:sec_2sided}
\naiveTwoSided*

\begin{proof}[Proof of Lemma~\ref{lem:naive-two-sided-high-prob}]
    Fix an edge $e$, and assume that the algorithm performs $k = \log(m^2) / (1 - 2 p)$ queries on it. Then the probability of receiving more true than false query responses for $e$ (ie. of uncovering it correctly in the first step of the algorithm) equals
\begin{align*}
\Pr{}{\text{correctly uncover}\ e} &= 1 - \Pr{}{\text{incorrectly uncover}\ e} \\
&\geq 1 - \Pr{}{X_e < k/2}
\end{align*}
where $X_e \sim \text{Binom}(k, (1-p))$ is the random variable counting the number of true responses for this edge. By using Hoeffding's inequality to bound the tail of a binomial distribution we obtain
\begin{align*}
\Pr{}{\text{correctly uncover}\ e} &\geq 1 - \exp\left(-2 k \left( 1/2 - p \right) \right) \\
&= 1 - \exp\left(-k \left( 1 - 2 p \right) \right) \\
&= 1 - \frac{1}{m^2}
\end{align*}

\noindent Then the probability of correctly uncovering all $m$ edges of the graph is

$$
\Pr{}{\text{correctly uncover all}} \geq \left(1 - \frac{1}{m^2} \right)^m \geq 1 - \frac{1}{m}
$$

where the last step is an application of Bernoulli's inequality. Thus, with high probability, our algorithm succeeds in finding a realized spanning tree.
\end{proof}

\section{Proofs from Section~\ref{sec:FN}}\label{appendix:sec_fn}

\naiveFN*
\begin{proof}[Proof of Lemma~\ref{lem:naive_FN}]
Let $T$ be any realized spanning tree. If an algorithm repeatedly queries all $m$ edges of the moldgraph in a Round-Robin fashion, then by a Coupon Collector's argument on the $n-1$ edges of $T$ it will need $O(\log n)$ rounds of queries on expectation, until it uncovers all edges of the spanning tree. Thus giving us $O(m \log n)$ queries in total.
\end{proof}

\parallelDiscovery*
\begin{proof}[Proof of Lemma~\ref{lemma:parallel-discovery}]
The algorithm proceeds in rounds. In each round it queries one edge from each of the sets of $\mathcal{S}$, performing a total of $k$ queries per round. The probability of getting a False Negative answer for a realized edge of $E_f$ is $p < 1/2$ and the number of queries performed on this edge follows a geometric distribution with parameter $1-p$. Thus, the expected number of queries needed to discover a realized edge of $E_f$ is at most $2 |E_f|$. As the algorithm queries exactly one edge per round, it will run in $2 |E_f|$ rounds, giving a total of $2k|E_f|$ queries in expectation.
\end{proof}

\section{Proofs from Section~\ref{sec:FP}}\label{appendix:sec_fp}

\lemmaDualFP*

\begin{proof}[Proof of Lemma~\ref{lem:dual-alg-fp-tree}]
We will prove Lemma~\ref{lem:dual-alg-fp-tree} by induction
on the number of non-realized edges of $G$. We denote the
edge set of $G$ by $E(G)$, the faces of $G$ by $F(G)$ and the
set of vertices of $G$ by $V(G)$.

\paragraph{Base Case:} Suppose that $G$ only has $1$ non-realized edge,
that is $E(G) = E(T) \cup \{e\}$. We know that the realized subgraph $T$
of $G$ is a tree, therefore $|E(T)|=|V(G)|-1$ and $|E(G)| = |V(G)|$.
As a result, the graph $G$ contains exactly one cycle and two faces;
one that is enclosed by the cycle and the outer face. The dual graph $G'$
will have exactly two vertices, one for each face, which will be connected
by all the edges of the cycle. Therefore, the induced subgraph of $G'$
that contains only the non-realized edge $e$ will be a spanning tree of $G'$.

\paragraph{Induction step:} Assume that for every graph $G$ with at most
$k$ non-realized edges, these edges form a spanning tree of the dual graph $G'$.
We will show that for all graphs $G$ with $k+1$ non-realized edges, these
edges also form a spanning tree of $G'$.

Let $G$ be a graph with $n$ vertices, $n-1$ realized edges that form a
realized tree $T$ and $k+1$ non-realized edges. Let also $e$ be a non-realized
edge of $G$. We remove $e$ from $G$ and take the induced subgraph $G_e$.
The induction step holds for $G_e$, hence its $k$ non-realized edges
form a spanning tree of its dual graph $G_e'$. By adding the edge $e$
in $G_e$, we form a new face $f_e$. We can also see this from Euler's formula:

\begin{align*}
    V(G) + F(G) &= E(G) + 2 \\
                &= E(G_e) + 1 + 2\\
                &= V(G_e) + F(G_e) + 1\\
    & \\
    \Rightarrow F(G) &= F(G_e) + 1\\
\end{align*}

For the new face $f_e$, there are two cases:
\begin{enumerate}
    \item Edge $e$ divides an old face into two new faces, one of them being $f_e$. In this case, the vertex of the old face in the dual graph $G_e'$
    will be expanded into two vertices that are connected by $e$ and,
    using the induction step, the $k+1$ non-realized edges will form 
    a spanning tree of $G'$.
    \item Edge $e$ closes a new cycle and encloses the new face $f_e$.
    The vertex of the face $f_e$ will be connected by $e$ to the vertex
    of the outer face. As a result, the $k+1$ non-realized edges will
    form a spanning tree of $G'$.
\end{enumerate}
\end{proof}

\naiveFP*

\begin{proof}[Proof of Lemma~\ref{lem:naive_FP}]
The algorithm that achieves the guarantees of the theorem 
is similar to the one for the 2-sided error oracle, with the difference that instead of categorizing edges as realized/non-realized based on the majority of query responses, we consider an edge as non-realized when it receives at least one ``No" response, and as realized otherwise. In this case we perform $k = \log_2(m^2)$ queries on each edge. Thus, to bound the probability of failure for a single edge it suffices to bound the probability of a non-realized edge giving $k$ false ``Yes" responses. This equals

$$
\Pr{}{\text{uncorrectly uncover}\ e} \leq p^k < \left(\frac{1}{2}\right)^k = \frac{1}{m^2}
$$

The rest of the proof is the same as for the 2-sided error oracle, and gives us that the algorithm succeeds with probability $(1 - 1/m)$.
\end{proof}

\begin{figure}[H]
\centering
\begin{subfigure}{\linewidth}
    \centering
    \includegraphics[width=0.73\linewidth]{pics/moldgraph.png}
    \caption{The moldgraph used for the proof of Theorem~\ref{thm:1-sided-fp-lb}}
    \label{subfig:FP-moldgraph}
\end{subfigure}%
\newline
\begin{subfigure}{\linewidth}
    \centering
    \includegraphics[width=0.73\linewidth]{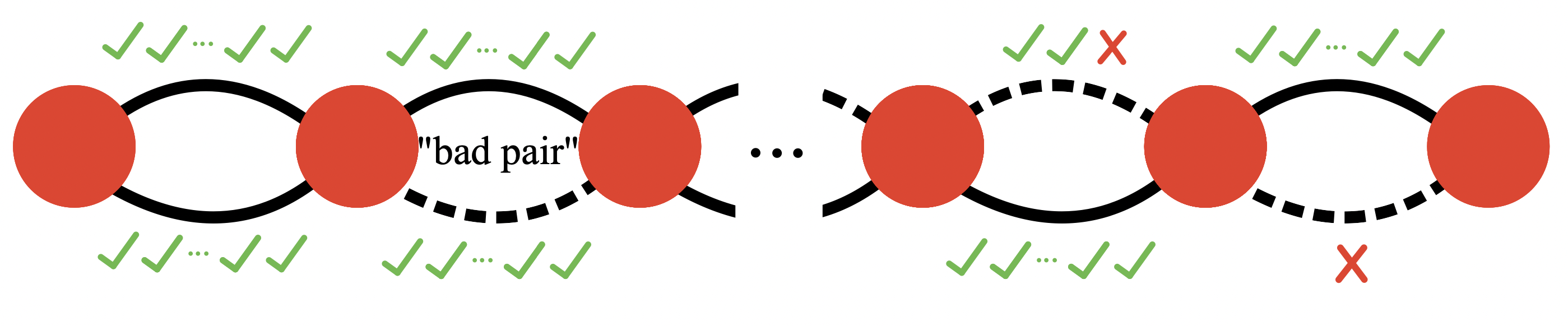}
    \caption{A realized graph from the family mentioned in the proof of Theorem~\ref{thm:1-sided-fp-lb}. The dotted lines are non-realized edges
    and by the edges lie the answers of the 1-sided FP noisy oracle.}
    \label{subfig:FP-LB}
\end{subfigure}%

\caption{}
\label{fig:FP-LB}
\end{figure}

\oneSidedFPLB*
\begin{proof}[Proof of Theorem~\ref{thm:1-sided-fp-lb}]
The outline of this proof is very similar to that of Lemma~\ref{lem:2-sided-error-lb}. Consider the graph in Figure~\ref{fig:FP-LB} which has $n+1$ vertices and $2n$ edges. Between any two consecutive vertices exist 2 parallel edges. Half of the edge-pairs are arbitrarily picked to have only 1 realized edge, while in the other half both of their edges are realized.

Every time that we receive a negative query response for an edge, we can be certain that this edge is not realized and thus discard it. Now, call ``bad pair" an edge-pair having one non-realized edge for which the algorithm receives $k$ consecutive positive query responses. We set $k = \Theta(\log(n / \log n)) = \Theta(\log n)$, which gives us $\Pr{}{\text{bad}} \geq \log n / n$. Thus the probability that \textit{at least} one such ``bad pair" exists is

$$
\Pr{}{\text{bad exists}} \geq 1 - \left( 1 - \frac{\log n}{n} \right)^{n/2} \geq 1 - \frac{1}{n}
$$

Note that this ``bad pair" should have received the exact same responses as edge-pairs having two realized edges. There are $n/2 = \Theta(n)$ such pairs and the ``bad pair" will be indistinguishable from them, as they have been picked arbitrarily. Thus, by a similar argument as in Theorem~\ref{lem:2-sided-error-lb} we conclude that any algorithm with constant probability of failure requires $\Omega(n \log n)$ queries to discover a spanning tree in this specific graph instance.

The desired $\Omega(m \log m)$ lower bound again results from the fact that if we had an algorithm performing better than $\Omega(m \log m)$ queries, it would perform better than $\Omega(n \log n)$ in this instance, as it has $m = O(n)$.
\end{proof}

\end{document}